\newtheorem{theorem}{Theorem}
\newtheorem{definition}{Definition}
\newtheorem{corollary}[theorem]{Corollary}
\newtheorem{proposition}[theorem]{Proposition}
\definecolor{red}{rgb}{0.0, 0.0, 0.0}
\definecolor{brown}{rgb}{0.0, 0.0, 0.0}
\definecolor{cyan}{rgb}{0.0, 0.0, 0.0}
\definecolor{blue}{rgb}{0.0, 0.0, 0.0}
\definecolor{purple}{rgb}{0.0, 0.0, 0.0}
\definecolor{violet}{rgb}{0.0, 0.0, 0.0}
\definecolor{orange}{rgb}{0.0, 0.0, 0.0}
\definecolor{teal}{rgb}{0.0, 0.0, 0.0}
\definecolor{magenta}{rgb}{0.0, 0.0, 0}
\definecolor{green}{rgb}{0.0, 0.0, 0}
\begin{document}
%
\title{FIRE: A Failure-Adaptive RL Framework for Edge Computing Migrations}

\author{\IEEEauthorblockN{Marie~Siew, Shikhar~Sharma, Zekai~Li, Kun~Guo, \textit{Member, IEEE}, Chao~Xu, \textit{Member, IEEE},\\ Tania~Lorido-Botran, Tony~Q.S.~Quek, \textit{Fellow, IEEE}, Carlee~Joe-Wong, \textit{Senior Member, IEEE}}
\thanks{Marie Siew and Tony Q.S Quek are with the Information Systems and Technology and Design Pillar, Singapore University of Technology and Design, Singapore 487372.}
\thanks{Shikhar Sharma, Zekai Li, and Carlee Joe-Wong are with the Electrical and Computer Engineering department, Carnegie Mellon University, 15213 Pittsburgh, PA, USA.}
\thanks{Kun Guo is with the Shanghai Key Laboratory of Multidimensional Information Processing, School of Communications and Electronics Engineering, East China Normal University, Shanghai 200241, China.}
\thanks{Chao Xu is with the School of Information Engineering, Northwest A\&F University, Yangling 712100, China.}
\thanks{Tania Lorido-Botran is with Roblox, USA.}
}

\maketitle

\begin{abstract}
In edge computing, users’ service profiles are migrated between edge servers due
to user mobility. Reinforcement Learning (RL) frameworks have been
proposed to do so, often trained on simulated data. However, existing
\textcolor{magenta}{RL} frameworks overlook occasional server failures, which although rare,
impact latency-sensitive applications like AR/VR and real-
time obstacle detection. These rare failures, being
not adequately represented in historical training data, pose a challenge
for data-driven \textcolor{magenta}{RL} algorithms. 
We introduce FIRE, a
framework that adapts to rare events by training a \textcolor{magenta}{RL} policy in an edge
computing digital twin environment. We propose FIRE-ImRE, an importance sampling-based Q-learning algorithm, which samples rare events proportionally to their impact on the value function. FIRE considers
delay, migration, failure, and backup placement costs across individual
and shared service profiles. We prove FIRE-ImRE’s boundedness and
convergence to optimality. Next, we introduce novel deep Q-learning
(FIRE-ImDQL) and actor critic (FIRE-ImACRE) versions of our algorithm
to enhance scalability. We extend our framework to accommodate users
with varying risk tolerances of rare failure events. 
Through trace-driven experiments, we show
that FIRE reduces edge computing costs compared to vanilla \textcolor{magenta}{RL} and the greedy baseline
in the event of failures.

\end{abstract}
\begin{IEEEkeywords}
Edge Computing, Service migration, Resilient Resource Allocation, Reinforcement Learning 
\end{IEEEkeywords}

%
\IEEEpeerreviewmaketitle

\section{Introduction}
\label{section:Introduction}

Mobile (or multi-access) edge computing (MEC) is a key technology in 5G and 6G telecommunication systems. It enables computationally intensive and latency sensitive mobile applications such as real time image processing, augmented reality \textcolor{magenta}{(AR)}, 
interactive gaming, etc. 
In MEC, computing resources such as server clusters are positioned close to 
end-users, typically
at cellular base stations or WiFi Access Points \textcolor{magenta}{(AP)} at the edge of the radio access network \cite{mao2017survey, PavelMECsurvey}. 
This proximity reduces response latency by avoiding the wide-area network delays encountered in cloud computing \cite{PavelMECsurvey}.

Researchers have been actively investigating how to jointly allocate computing and radio resources to computing tasks offloaded to edge servers \cite{mao2017survey, Xuchen16}. 
Nevertheless, 
user mobility across geographical areas presents a challenge towards task offloading \cite{SWangSM_MDP, OuyangFollowMe,wang2018survey}. 
Service migration, or moving service profiles to \textcolor{magenta}{APs} nearer the user as they move, has been proposed to maintain service continuity \cite{OuyangFollowMe,wang2018survey}.
Yet, frequent service migration 
causes additional expenditure, due to the required network usage. 
Therefore, balancing the delay-migration cost tradeoff has been widely studied \cite{SWangSM_MDP, wang2019delay, OuyangFollowMe, networkAccessSP, MSiew2021VM}. 
\textcolor{purple}{Due to the large decision space 
when the numbers of \textcolor{magenta}{APs} and time-slots increase, and the lack of information on costs and mobility patterns,
Reinforcement Learning (RL) is a common solution method \cite{SWangSM_MDP,wang2019delay, RLnetworkEdge}. \textcolor{cyan}{Such methods often rely on digital twins of the edge computing network to train \textcolor{magenta}{RL} models, as digital twins} allow the exploration of different resource allocation policies, without generating real-world consequences.
}

\subsection{Challenges: Edge Computing and Resilience}
\textcolor{blue}{
Few migration works, however, account for another challenge in edge computing: the resilience of edge computing systems to \textit{rare but serious events like server failures}.
}
They occur due to reasons such as system overload, hardware failures, or malicious attacks, and they can be costly:
the average cost of an outage at a cloud data center, for example, has increased to $\$740000$ in $2016$ \cite{ponemon2016cost}. 
Edge server failures may be even more probable than those in the cloud due to their distributed geographical locations, which complicates management and maintenance~\cite{aral2020learning}\textcolor{cyan}{, though they will likely remain rare overall}.
\textcolor{cyan}{Even worse, edge computing failures can} have a sizeable impact as many edge computing applications are latency sensitive, such as the Internet of vehicles, \textcolor{magenta}{AR}, and video processing applications \cite{mao2017survey}.
\textcolor{blue}{
Managing resources without considering rare but severe failure increases latency and reduces reliability.}
\textcolor{brown}{For instance, if the user's service profile is migrated to a server that then fails, its job is not able to be completed,} jeopardizing the \textcolor{brown}{smooth and safe functioning of applications, such as autonomous driving and real time obstacle detection.} 


\textcolor{brown}{We propose to use \emph{backups},} 
which can take over when the primary service's edge server fails, to handle failures. 
Nevertheless, managing the migration of both the primary and backup services is challenging, \textcolor{red}{as the number of \textcolor{magenta}{APs} and timeslots increases.}
As mentioned above, due to the large decision space, and the network operator's lack of information, \textcolor{magenta}{RL} has traditionally been proposed to solve the service migration problem \cite{SWangSM_MDP,wang2019delay, RLnetworkEdge, 8705822}.
Nevertheless, rare events and failures occur at a low probability, making it difficult to jointly plan or learn an optimal resource allocation policy for both the usual and rare event scenarios.
\textcolor{brown}{This is because \textit{RL's reliance on past reward data may overlook the significance of rare events, impacting the training process 
\cite{frank2008reinforcement}.}}
Therefore, we introduce \textbf{FIRE}: Failure-adaptive Importance sampling for Rare Events, a \textit{resilience} framework for edge computing service migration.






\subsection{FIRE: Failure-adaptive Importance sampling for Rare Events}
\textcolor{violet}{Our FIRE framework employs a digital twin - inspired setup, employing \textcolor{magenta}{RL} to optimize service migration and the \textit{backup migration} in edge computing systems, amidst potential rare service failures.}
\textcolor{violet}{
A digital twin edge computing system virtually replicates 
a physical edge computing system and mirrors its network conditions, akin to 
similar setups used for data center management and remote monitoring \cite{dataCenterDigitalTwin}. 
Our framework FIRE \textcolor{cyan}{further models limited edge server capacity and communication delays, }
and \textcolor{purple}{our optimization balances the trade-offs amongst the following costs:} \emph{communication and computing delay costs} of the multiple users, the \emph{migration cost} of service profiles, the \emph{backup storage and migration costs}, along with the \emph{failure costs}.}

Because the actual rare event probability is low and unable to help us learn an optimal policy to prepare for rare events, \textcolor{violet}{we use \textbf{importance sampling of rare events} in our digital twin setup to increase the sampling of the rare events in the RL environment.}
Error correction is done through the use of importance sampling weights, to enable us to learn an optimal policy with respect to the true rare events probability.
\textcolor{violet}{Unlike online training, training of RL in the edge computing digital twin also enables the system to learn policies while 
\emph{avoiding the large failure costs} of experiencing actual server failures.} 
The converged policy trained by the simulator is then applied to online scenarios, in which rare events happen at their natural probabilities.
\textcolor{teal}{While \cite{precup2000eligibility,frank2008reinforcement} also use importance sampling to sample rare events, these two works involve estimating the value function given a fixed policy, and their algorithm is not applied in the edge computing scenario. 
In contrast, FIRE learns, not just evaluates, the optimal policy $\pi^*$. We further propose novel boundedness and convergence proofs. \emph{To the best of our knowledge, we are the first to propose importance sampling of rare events to prepare for failures in edge computing.}}

\textcolor{violet}{We model two decision making scenarios. Firstly, we suppose that every user has an individual service profile for job computation. Secondly, we consider multiple users sharing service profiles, for example common game environments when users are playing the same mobile or AR game together, or common neural networks for multi-modal learning
\textcolor{purple}{\cite{zhang2019dynamic}.}} 
\textcolor{blue}{The service profile replicas can also serve as backups in this second scenario\textcolor{cyan}{~\cite{aral2018replica}, while in the first scenario we must create an individual backup for each user}.} 
\textcolor{cyan}{We further consider that edge computing users may} exhibit varying risk tolerances due to differences in their computing applications and latency requirements, or different willingness to incur the backup placement or migration costs.

In summary, our \textbf{contributions} are as follows: 
\begin{itemize}
    \item 
    We present a \textcolor{cyan}{\textbf{model} of service migration and backup placements given user mobility} 
    that introduces server failures as rare events. \textcolor{purple}{We model both the individual and shared service profile settings.} 
    \item We introduce \textcolor{brown}{\textbf{FIRE}}, a \textit{resilience} framework designed to handle rare events like server failures. 
    \textcolor{cyan}{As a digital twin, o}ur importance sampling based Q-learning algorithm \textbf{FIRE-ImRE} 
    simulates rare events frequently without real-world consequences\textcolor{cyan}{, allowing us to learn a failure-aware migration policy. 
    }
    We prove that this algorithm is bounded (Theorem 2), and that it converges to the optimal policy (Theorem 4). 
    \item We propose 
    \textcolor{violet}{deep Q-learning (\textbf{FIRE-ImDQL}) and actor critic (\textbf{FIRE-ImACRE}) versions} of FIRE, to handle large and combinatorial state and action spaces in real-world networks. \textcolor{red}{These algorithms differ from traditional deep Q-Learning and actor critic as we incorporate importance sampling and error correcting weights.}
    Additionally, \textcolor{cyan}{we present the \textbf{RiTA} algorithm 
    that adjusts service migration and backup placement based on individual risk tolerance, without separate RL training for different risk tolerances.}
    \item Finally, we provide \textbf{trace-driven simulation results} showing the convergence of our algorithms to optimality. We show that unlike vanilla Q-learning, FIRE's variants are resilient towards server failures, resulting in sizeable cost reductions.
    \textcolor{purple}{Our importance sampling framework can be extended to solve other resource allocation and decision making problems in light of rare events in networking scenarios.}
\end{itemize}

The paper is structured as follows. Section \ref{section:related} discusses the related work. \textcolor{magenta}{Sections \ref{section:SysModel} and \ref{section:SysModel2} present the system models, on the individual and shared service profile settings respectively.} Section \ref{section:OptProb} presents the problem formulation. Following which, we present our solution and its proofs in Section \ref{section:AlgoSolution}, and the deep Q-learning and Actor Critic algorithms in Section \ref{section:FnApproxDQN}. Next, 
Section \ref{section:riskLevel} addresses the heterogeneous risk tolerance scenario.
Finally, we present simulation results in Section \ref{section:Simul} and conclude in Section \ref{section:conclusion}.





\section{Background and Related Work}
Building on the literature on service and virtual machine (VM) migration in follow-me-cloud, 
\textbf{service or VM placement and migration} in light of user mobility in edge computing has been a widely studied problem \cite{wang2018survey,SWangSM_MDP, RLnetworkEdge, wang2019delay, OuyangFollowMe, networkAccessSP, MSiew2021VM, kim2022modems}. 
These are often NP-hard knapsack-like problems.
In particular, \cite{SWangSM_MDP, wang2019delay} used dynamic programming and RL to make migration decisions given unknown mobility information. 
\cite{OuyangFollowMe} optimized the long term performance-cost trade-off 
via Lyapunov optimization.
Nevertheless, these service migration works have not addressed resilience, including preparation for and adaptation to server failures. 

Other works have realized that \textbf{cloud and edge computing presents resilience challenges} to resource allocation, including 
ultra-reliable offloading mechanisms using two-tier optimization~\cite{liu2019dynamic}, 
proactive failure mitigation for edge-based NFV~\cite{huang2019proactive}, failure aware workflow scheduling for cloud computing \cite{cloudWorkflowSchedule}, RL for replica placement in edge computing~\cite{liang2021two}. 
and graphical models to learn spatio-temporal dependencies between edge server and link failures \cite{ponemon2016cost}. The above works did not consider resilience in light of user mobility.
The migration of users at overlapping coverage areas has been considered in \cite{du2021optimal}, but here, users at the center of coverage areas still experience a higher latency \textcolor{cyan}{as they are} connected to the cloud.
Nevertheless, the above works did not consider 
unknown user mobility and unknown costs. \textcolor{orange}{Our prior work \cite{siew2023acre} 
addressed rare events in single-user service migration using an actor-critic importance sampling algorithm. Here, we extend this to multi-user scenarios with both individual and shared service profiles, introduce tabular and deep Q-learning algorithms, and provide theoretical guarantees for the tabular algorithm.
}

\textcolor{magenta}{We do not assume that failures can be predicted, unlike most other failure mitigation works. Instead, we aim to learn a policy that builds in resiliency to rare events that are not predicted in advance. Note that our algorithm can complement strategies like checkpointing \cite{mudassar2022adaptive} or dynamically adjusting the number of service backups. It addresses the unique challenge of resiliency in reinforcement learning-based policies, which fundamentally suffer from a lack of rare events in their training datasets.}

Other edge computing challenges have been explored from the \textbf{risk-awareness} perspective, catering to heterogeneous users and applications, for example risk-aware non-cooperative MEC offloading \cite{riskAwareOffloading1}, 
and energy budget risk-aware application placement \cite{riskAwareApplicationPlc}. \textcolor{cyan}{However, they do not consider the effect of rare events on learned RL policies.}

\textbf{(Deep) RL} has been used to optimize migration and offloading in edge computing \cite{SWangSM_MDP, riskAwareOffloading1,wang2019delay, RLnetworkEdge, 8705822} 
with the goal of optimizing across trade-offs such as migration cost, latency and energy costs. 
Importance sampling has been used in RL for policy evaluation (i.e., finding the expected future reward from following a given policy $\pi$) \cite{precup2000eligibility, frank2008reinforcement}. 
However, these two works involve estimating the value function given a fixed policy, instead of learning the optimal policy $\pi^*$ over time. Their proof techniques do not generalize to the $Q$-learning 
~\cite{sutton2018reinforcement} \textcolor{cyan}{that forms the core of FIRE}.
Recent work has proposed using importance sampling to enhance RL's data efficiency~\cite{madhushani2020hamiltonian} in general, but they do not consider rare events explicitly and 
require knowledge of the full system dynamics. 
\label{section:related}

\section{Migration of Individual Service Profiles} 
\label{section:SysModel}
\textcolor{purple}{In this section, we introduce our failure-aware service migration model, with individual service profiles. }
\textcolor{cyan}{In the next section, we will consider shared service profiles.}
\textcolor{magenta}{Our system models, which capture rare events, can be integrated into a ``digital twin'' for offline algorithm learning in light of rare events such as server failures.  The purpose of digital twins is to allow the system to learn a policy in a simulated and virtual environment, without actually experiencing the cost of rare events.} \textcolor{green}{For example, a simulated edge computing environment may serve as a digital twin \cite{dong2019deep,zhang2021adaptive}.} 

\subsection{Service Placement Model}\label{subsection:SvcPlacementModel}
\textbf{Users and edge servers.} We consider an MEC system with \textcolor{violet}{$K$ users and} $N$ \textcolor{magenta}{APs} (APs), such as a base station or wifi-access point. Each AP is equipped with a server to which users can offload latency-sensitive computing jobs\textcolor{cyan}{, and we associate each AP with a disjoint physical region; users in this region then connect to this \textcolor{magenta}{AP}}. 
Users are mobile and move from region to region at different times of the day. For example, office workers may move to the city for work during morning rush hours, changing their nearest AP.
User $k$'s location at time $t$ is represented by \textcolor{violet}{$l_u^k(t)$}.
We consider that each user will be associated with its nearest AP, via local area network (LAN).
The users' mobility pattern follows a probability distribution $m(l_{u}'^{k}|l_u^{k})$, where $m(l_{u}'^{k}|l_u^{k})$ represents the transition probability the user travels to location $l_{u}'^{k}$, after being at location $l_u^{k}$. 

\textcolor{cyan}{Each user maintains a \textit{service profile} at the edge servers, e.g., in \textcolor{magenta}{VMs} or containers \cite{OuyangFollowMe}, that consists of the state needed to run its edge computing application.} 
\textcolor{cyan}{Our goal is then to decide where the service profile and any backups should be located.}
\textcolor{black}{To maintain low latencies for time-sensitive service applications, in light of user mobility, services are pre-migrated - likely to nearby APs. }
\textcolor{purple}{Service migration is stateful, involving the migration of the set of parameters and intermediate results associated with the user's computing job.} \textcolor{black}{At each time-slot $t$, the user sends a service request to the network operator.} 
The location of the user's service at time $t$ is represented by $l_s^k(t)$, \textcolor{brown}{and due to pre-migrations the network operator makes this placement decision before the user location at time $t$ is known.} 

\textbf{Failure model.} In reality, rare events (system anomalies) such as server failures or shutdowns occur. 
Such events will impact the user's quality of service, e.g., if the user is not able to have his or her job computed on time, impacting the safe and smooth functioning of edge applications and jeopardizing the low latencies of edge computing. \textcolor{green}{We define a failure as any event that prevents a user's service job from being completed, which may stem from either a hardware or a software failure.}

We consider two server types in the network. The first type models larger edge datacenters, which tend to have more safety mechanisms like fully duplicated electrical lines with transfer switches that help servers come back online fast after a failure, causing just a slight delay to users. 
The second type models a smaller edge data center that is less equipped with mechanisms to help deal with failure.
These servers take longer to come back online, resulting in a longer delay and higher probability of jobs not being served at all, hence a higher cost.
\textcolor{purple}{We let $f^k_{ind}(t)$ represent the failure indicator, which indicates whether $l_s^k(t)$, the location of user $k$'s service profile, experiences failure at time $t$.
}
\textcolor{orange}{Failures are Markovian, and under Server Type two, there is a higher probability of failures continuing into the next time-slot.}

\begin{figure}[t]
\centering
\includegraphics[
angle=0,scale=0.255]{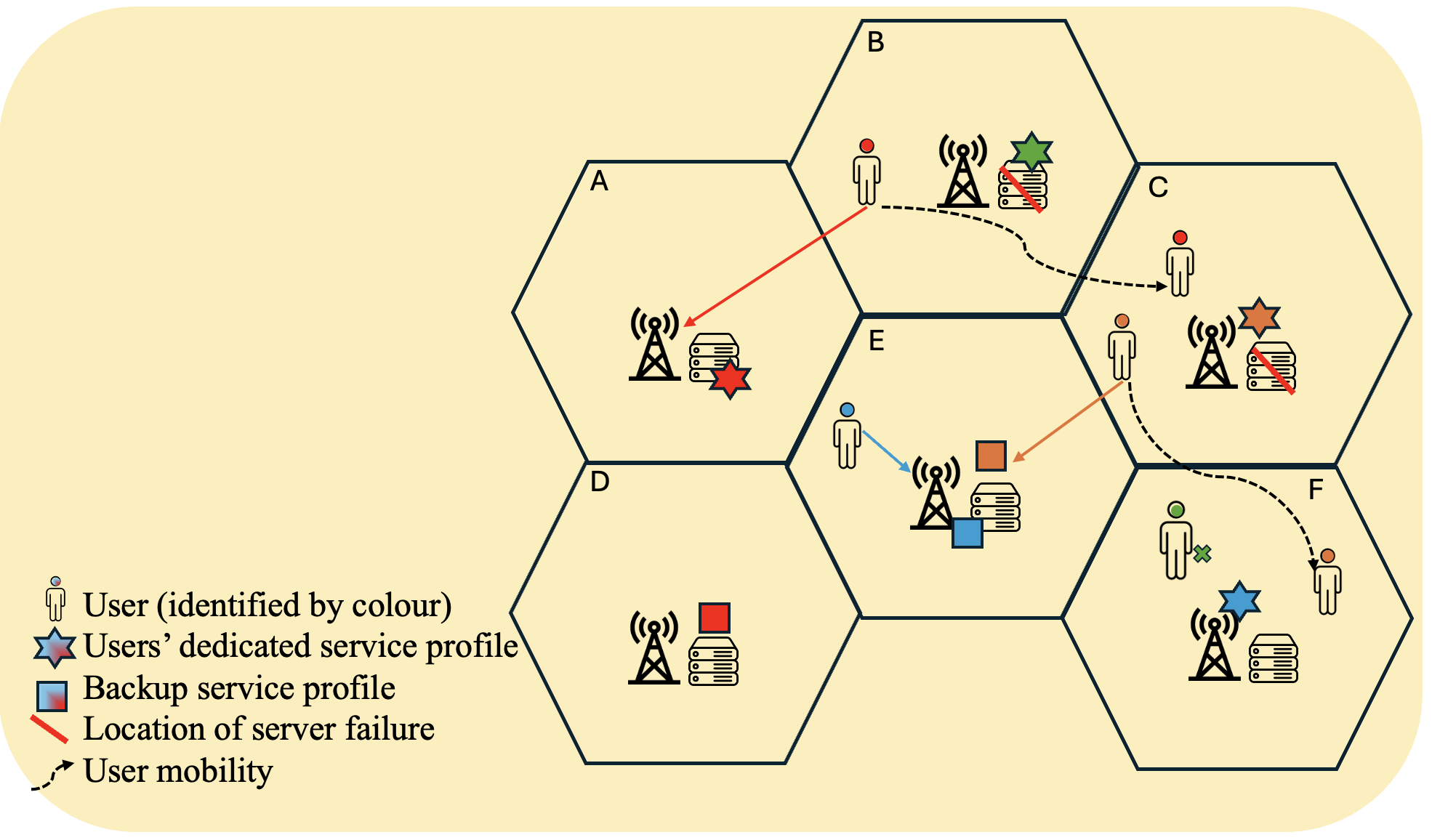}
\caption{\textcolor{magenta}{Users move across the network. Each user has its individual service profile, which has a backup, in case the server where their service profile is at fails.} 
\textcolor{magenta}{To evaluate system resilience and optimize failure-handling algorithms and strategies without incurring real-world costs, the setup is modeled within a digital twin.}
}
\label{fig:sysModel}
\end{figure}

To take into account the potential failures and reduce the costs experienced by the user, a \textbf{backup} of the user's service profile can be placed in the system.
This way, if the server $l_s^k(t)$ at which user $k$'s service is placed at experiences a failure, the user will still be able to offload its job.
\textcolor{brown}{As seen in Fig. \ref{fig:sysModel}, unlike the green user which does not have a backup service profile, the brown user is still able to offload its job computation even though the server at AP C failed, because there is a backup at location E. 
}

We formulate a \textbf{Markov Decision Process} (MDP) with system state $s(t) \in S$ is \textcolor{violet}{
$[s_1(t), .. s_k(t),... s_K(t)]$, where 
\begin{equation}
s_k(t)=(l_u^k(t), l_s^k(t), f_{ind}^k(t), b_{ind}^k(t)).
\label{eq:state}
\end{equation}
} 
$l_u^k$ is user $k$'s current location, $l_s^k$ is the location of user k's service, $f_{ind}^k$ is an indicator on whether the user's service is placed at a location with a server failure, and $b_{ind}^k$ indicates whether there currently is a backup for user $k$'s service in the system.
At each timeslot $t$, the network operator takes action \textcolor{violet}{$a(t)=[a_1(t), ... a_k(t), ... a_K(t)]$, where 
\begin{equation}
a_k(t)=( l_s^k(t+1), b_u^k(t+1) )
\end{equation}}
\textcolor{brown}{pre-emptively, before the users moves to their next locations $[l_u^k(t+1)]$.} This involves determining the placement location of the service $l_s^k(t+1)$, and the position of the backup in the system $b_u^k(t+1) \in \{0,1,... N, \text{no backup} \}$. 
These decisions will directly impact the system state $s$ at the next timeslot.

\textcolor{cyan}{Transitions between states are governed by users' movements between APs ($l_u^k(t)$ and $l_u^k(t+1)$) following their mobility pattern $m(l_u'|l_u)$.
To quantify them, we let}
\begin{equation}
    \textcolor{violet}{h(\tilde{s}|s,a)=Pr(a) \prod^K m(l_u'|l_u),}
    \label{eq:intermediateTransProb}
\end{equation} 
the product of the user's mobility distribution and the probability that the network operator takes a particular action,
where \textcolor{violet}{$\tilde{s}(t)
=[\tilde{s}_1(t), .. \tilde{s}_k(t),... \tilde{s}_K(t)]$} and
\begin{equation}
    \tilde{s}_k(t+1)=(l_u^k(t+1), l_s^k(t+1), b_{ind}^k(t+1)) \in \tilde{S}   
    \label{eq:intermediateState}
\end{equation} 
refers to the state record without the failure indicator $f_{ind}$. We let the set of these records be $\tilde{S}$.
At each state $s$, there is a small probability $\epsilon (s)$ (e.g., $\epsilon (s)=0.01$), that in the next timeslot, a server failure occurs \textcolor{violet}{in at least one location}. \textcolor{orange}{The failure indicator $f_{ind}$ is then set to $1$. 
These states belong to the set of \textit{``Rare event states"}, $T \subset S$,} \textcolor{orange}{defined as follows:}
\begin{definition}
A subset of states $T \subset S$ is called the Rare Event State Set if the following properties hold:


1. There exists $s \in S, s' \in T$ for which $p(s'|s,a)>0$.

2. Let $T^{\pi}(s)$ denote the contribution of the rare states towards the value of state $s$, according to Eq. (\ref{eq:T_eqnDef}). 
For a given policy $\pi$, there exists $s \in S$ for which
\begin{equation}
\label{eq:AlgoAssumptionV}
    |T^{\pi}(s)| \gg 0
\end{equation}

3. For every $s' \in T$, $f_{ind}=1$, indicating that the location of at least one user's service profile $l_s^k$ is experiencing a failure.
\end{definition}
Property 1 means that transition to the rare event state set $T$ is possible.
Property 2 
means that the rare event states (set $T$) collectively have a sizeable (non-negligible) impact on the value function and hence on the system's cost. 

With a larger probability of $1-\epsilon(s)$, no server failure occurs, i.e. $f_{ind}=0$. The non rare event states \textit{``Normal States"} are defined as the set $S \backslash T$. Hence we have
\begin{equation}
\begin{aligned}
    s_k(t+1) &=\tilde{s}_k(t+1) \cup f_{ind}^k(t+1)\\
   & =(l_u^k(t+1), l_s^k(t+1), b_{ind}^k(t+1), f_{ind}^k(t+1) )
\end{aligned}
\end{equation}
Therefore, the overall state transition probability of the system can be expressed as
\begin{equation}
    p(s'|s,a)= \begin{cases} (1-\epsilon(s))h(\tilde{s}|s,a) , & \text{if} \ s' \notin T \\
    \epsilon(s) h(\tilde{s}|s,a), & \text{if} \ s' \in T,
    \end{cases}
    \label{eq:OverallTransProb_Rewritten}
\end{equation}

\subsection{Reward Function}

We next specify the reward function of our MDP. We aim to learn a policy that minimizes the total cost of migration, including operational, failure and backup costs.
\subsubsection{\textcolor{purple}{Operational Costs}} 
\textcolor{black}{The network operator migrates the users' profiles across \textcolor{magenta}{APs}, 
to deliver a better QoS (lower job delay) for the user.} 
\textcolor{cyan}{We first model the costs incurred in normal (non-failure) states, which include communication and computing delays, both of which affect user QoS, as well as the cost of migrating service profiles.}
We let $d^{\text{comm},k}_{l_u,j}$ denote the communication delay the user faces when it offloads its job to the edge server, given that its current location is $l_u^k$ and its service is placed at \textcolor{magenta}{AP} $j$. 
$d^{\text{comm},k}_{l_u,j}$ consists of the access latency of uploading the job to the user's associated \textcolor{magenta}{AP}, and the transfer latency of forwarding the job to the edge server's location if the service is placed at another \textcolor{magenta}{AP} (i.e. $l_u^k \neq j$). This transferring latency via LAN depends on the hop count of the shortest communication path \cite{ouyang2019adaptive}.
Therefore, the delay is a function of the distance between $l_u^k$ and $j$.
Therefore, the \textcolor{violet}{\textbf{total communication delay cost of $K$ users }
at time $t$ is}

\begin{equation}
    D(t)=\sum_{k=1}^K \sum_j d^{\text{comm}}_{l_u^k,j} \mathbbm{1}_{\{l_s^k(t)=j\}},
    \label{eq:CommDelay1User}
\end{equation}
where $\mathbbm{1}_{\{l_s^k=j\}}$ is the indicator function, which will take the value of $1$ if the user's service profile is placed at \textcolor{magenta}{AP} $j$, and will take the value of $0$ otherwise.

\textcolor{violet}{As multiple users share the resources at an edge node, users will experience a compute delay. Modelling the process at each AP as an M/M/1 queue, 
the \textbf{total computing delay experienced across all APs} at time $t$ is \cite{networkAccessSP}: 
}
\begin{equation}
    C(t)=\sum_{j \in A}\sum_{k=1}^K \mathbbm{1}_{\{l_s^k(t)=j\}}  \frac{1}{x_j - \sum_{k=1}^K z_k \mathbbm{1}_{\{l_s^k(t)=j\}} }
    \label{eq:compdelay}
\end{equation}
\textcolor{violet}{where $x_j$ refers to the capacity at AP $j$, $z_k$ is user $k$'s task size, and $\sum_{k=1}^K z_k(t) \mathbbm{1}_{\{l_s^k(t)=j\}}$ is the total load at AP $j$ at time $t$. 
The total computing delay is the sum over all locations $j$ and users $k$, of the average queuing delay each user experiences.}
\textcolor{purple}{As users share computing resources at an AP, the presence of other users' jobs will impact the queuing (i.e. computing) delay experienced by a user. 
}

The migration cost $m_{i,j}$, which includes the operational and energy costs on network devices like routers and switches \cite{OuyangFollowMe}, 
is a function of the distance across two APs $i$ and $j$. 
The \textbf{total migration cost of $K$ users} at time $t$ is 

\begin{equation}
    M(t)=\sum_{k=1}^K \sum_{i \in A} \sum_{j \in A} m_{ij} \mathbbm{1}_{\{l_s^k(t-1)=i\}}\mathbbm{1}_{\{l_s^k(t)=j\}},
    \label{eq:switchingCost}
\end{equation}
where $\mathbbm{1}_{\{l_s^k(t-1)=i\}}$ is the indicator function that equals $1$ if the service profile was placed at location (AP) $i$ at time $t-1$.

\subsubsection{Failure and Backup Costs}
\textcolor{magenta}{We next specify the failure costs and the costs incurred by provisioning backups in case of failures. As mentioned earlier, the system can be modeled within a digital twin setup, allowing algorithm testing and policy learning in a virtual and simulated setting - to avoid experiencing the real-world cost.}
There is a \textbf{backup cost} $B(t)$ incurred by \emph{storing} the backup at a server: 
a sum of the backup migration cost and the backup storage cost,
as it takes up space and prevents other content from being stored.
\begin{equation}
\begin{aligned}
    \textcolor{brown}{B}(t)&=\sum_{k=1}^K \sum_{j\in A} \rho_j \mathbbm{1}_{\{b_u^k(t)=j\}}\\
    &+\sum_{k=1}^K \sum_{i \in A} \sum_{j \in A} m_{ij} \mathbbm{1}_{\{b_u^k(t-1)=i\}}\mathbbm{1}_{\{b_u^k(t)=j\}},
    \label{eq:backupCost}
\end{aligned}
\end{equation}
where $\rho_j$ is the storage cost at \textcolor{magenta}{AP} $j$.
There is also a \textbf{failure cost} $F(t)$ incurred
\begin{equation}
    F(t)= \sum_{k=1}^K F \mathbbm{1}_{\{f_{ind}^k(t)=1\}} (\mathbbm{1}_{\{b_u^k(t)=\text{no backup}\}} \lor \mathbbm{1}_{\{b_u^k(t)=l_s^k(t)\}})
    \label{eq:failureCost}
\end{equation}
\textcolor{violet}{This cost is experienced at time $t$ when i) location $l_s^k(t)$ has a failure at time $t$ $(f^k_{ind}(t)=1)$,
and ii) there is no backup in the system ($b_u^k(t)=\text{no backup}$),} or when the placement of the backup $b_u^k(t) = l_s^k(t)$, the location of the main service.
Otherwise, if there is a backup placed, there will be no failure cost. The presence of failures themselves, i.e., $f_{ind}(t)$, is however independent of the user actions. 



\section{Migration of Shared Service Profiles} 
\label{section:SysModel2}
\begin{figure}[t]
\centering
\includegraphics[
angle=0,scale=0.24]{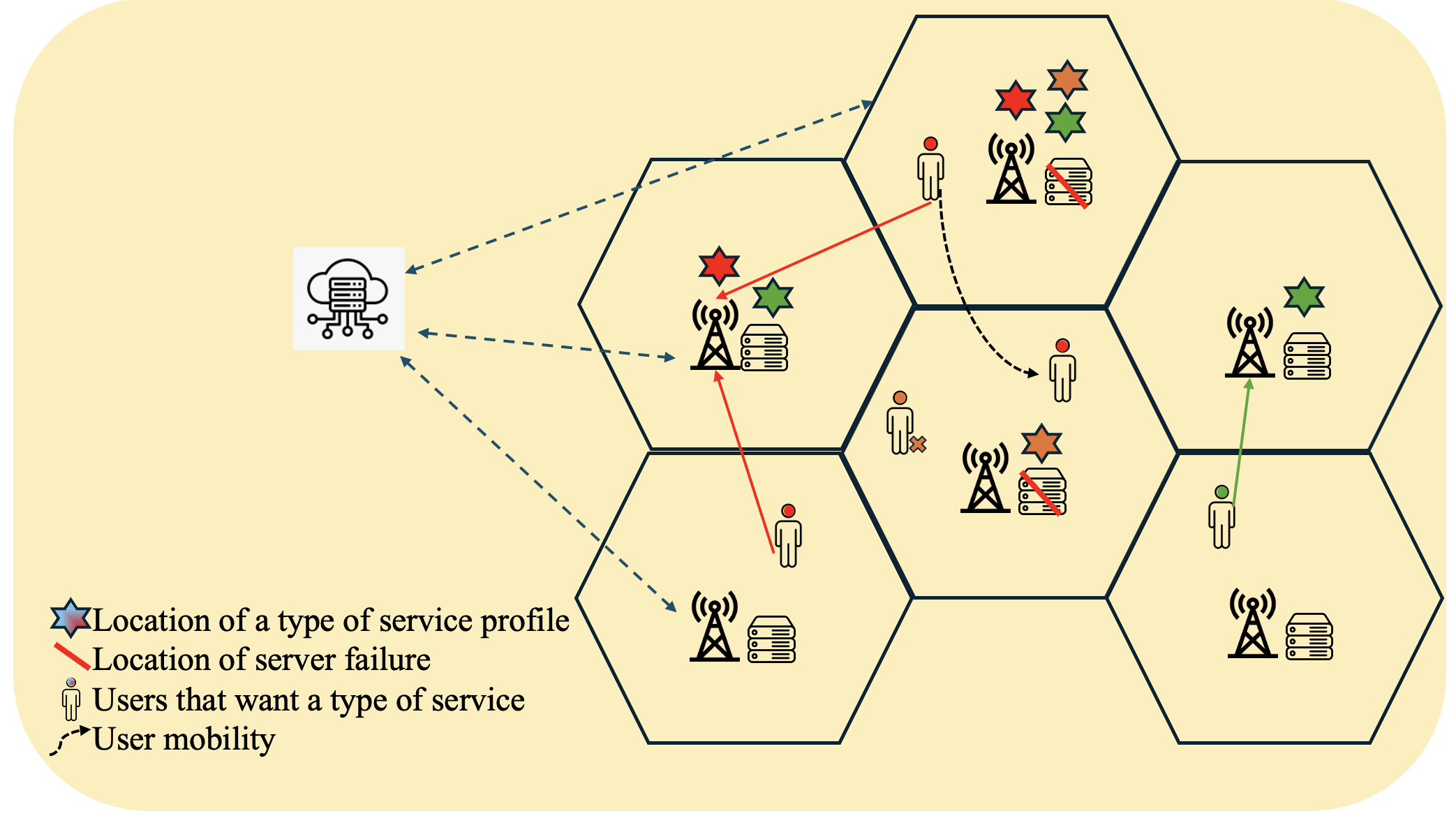}
\caption{\textcolor{magenta}{Multiple users (who are mobile across the network) share a service profile (SP) for their task, e.g. a common game environment or neural network. For instance, the users in red share the SP in red, which has multiple copies in the network.}}
\label{fig:SysModel2}
\end{figure}
\textcolor{magenta}{In this section, we introduce the edge computing service migration scenario where, differing from Section \ref{section:SysModel},} multiple users share a common service profile (SP) for the same task. This could be a common neural network for multi-modal learning, or a common game environment for their \textcolor{magenta}{AR} or mobile gaming applications \textcolor{purple}{such as when multiple users are playing a game together \cite{zhang2019dynamic}}. 
Therefore, the network operator is concerned with the placement of shared SPs across the network, in light of the geographical demand distributions of the different SPs.
In the event of a server failure, the impact will be multiplied when SPs are shared, as more users cannot have their job served.

\textcolor{cyan}{As in the case of individual service profiles, we} formulate a \textbf{Markov Decision Process} (MDP) in which the system state is
$[s_{1,1}(t), .. s_{i,j}(t),... s_{I,J}(t)]$, where 
for each location, SP pair ($i,j$), their component of the state space at time $t$ is
\begin{equation}
    s_{i,j}(t)=(u_{i,j}(t), d_{i,j}(t-1), f_i(t), t_i(t)),
    \label{eq:SharedSP_StateSpace}
\end{equation}
where $u_{i,j}(t)$ is the number of users who want to use SP $j$ at location $i$, $d_{i,j}(t-1)$ indicates whether SP $j$ has been previously deployed at location $i$, and $f_i(t)$ and $t_i(t)$ denote the failure status and the server type at location $i$ respectively.
\textcolor{purple}{Here, there are multiple identical copies of profiles being placed, as many users share them (see Fig. \ref{fig:SysModel2}).}
\textcolor{purple}{Note that this state space (concatenation over all location - SP pairs) is different from that in the individual service profile scenario in Section \ref{section:SysModel} (concatenation over individual user information).
Here the placement decisions depend on the location and mobility distributions of the users who \textit{share} the profile, whereas in Section \ref{section:SysModel} the placement of every profile is dependent on the location of the corresponding individual user.}
The \textcolor{cyan}{corresponding} action space will be
\begin{equation}
        [ (l^{SP1}_{\text{loc}1}, l^{SP1}_{\text{loc}2}, ..., l^{SP1}_{\text{loc}N}), ...(l^{SPj}_{\text{loc}1}, l^{SPj}_{\text{loc}2}, ...), ...],
        \label{eq:SharedSP_ActSpace}
    \end{equation}
where $l^{SP j}_{\text{loc}\ i} \in \{0,1\}$ is a binary variable indicating whether we place or migrate SP $j$ at location $i$.

\textcolor{cyan}{As in the case of individual service profiles, we aim to learn a policy that minimizes the total costs, which include both operational and backup and failure-related costs.} 

The \textcolor{purple}{\textbf{total communication delay} cost across all SP types $j \in SP$ and locations $i \in A$,
at time $t$ is}
\begin{equation}
    D^{sh}(t)=\sum_{j \in SP} \sum_{i \in A} u_{ij}(t) \min_{ \hat{i} \in \{ \hat{i} | l^{SP j}_{\text{loc}\ \hat{i}}=1\}} d^{\text{comm}}_{i,\hat{i} },
    \label{eq:CommDelay_Shared}
\end{equation}
\textcolor{purple}{where $d^{comm}_{i, \hat{i}}$ is the communication delay from location $i$ to $\hat{i}$, same as Section \ref{section:SysModel}. The users at location $i$ access the nearest AP which contains the SP type they require ($\min_{ \hat{i} \in \{ \hat{i} | l^{SP j}_{\text{loc}\ \hat{i}}=1\}} d^{\text{comm}}_{i,\hat{i} }$).
\textcolor{cyan}{QoS is also affected by the} \textbf{total computing delay experienced across all APs} at time $t$ \cite{networkAccessSP}:}
\begin{equation}
    C^{sh}(t)=\sum_{i \in A} \frac{1}{x_i - \sum_{j \in SP} \sum_{\hat{i} \in A} load_{j,\hat{i}}(t)}
    \label{eq:CompDelay_Shared}
\end{equation}
\textcolor{purple}{where $x_i$ is the capacity of the servers at AP $i$. The load at each location $i$ is a sum over all SP types $j$, all other locations $\hat{i}$ (inclusive of $i$). The load $load_{j,\hat{i}}(t)=z_j u_{i,j}(t) 
\wedge \mathbbm{1}_{\{i =\text{argmin}_{\{m\ | l^{SP j}_{loc\ m}=1\} } d^{comm}_{m,\hat{i}}  \}} $, the product of task $j$'s size $z_j$, the number of users at location $\hat{i}$ who need SP type $j$, $u_{i,j}(t)$, 
and the indicator function indicating that location $i$ is the nearest location containing SP $j$, to the users at $\hat{i}$. \textcolor{cyan}{As in the case of individual service profiles, service profiles at the same AP must share resources, leading to increased delays.}
}

The \textbf{total migration cost} at time $t$ is
\begin{equation}
\begin{aligned}
    & M^{sh}(t) = \sum_{j \in SP} \mathbbm{1}_{\{d_{i,j}(t-1)=0\ \&\ d_{i,j}(t)=1\}} \times \big[ g_{cloud,i} \\ \label{eq:MigrationCost_Shared}
    & \mathbbm{1}_{\{ d_{\hat{i},j}(t-1) =0,\ \forall \hat{i}\}} + \min_{\{\hat{i}| d_{\hat{i},j}(t-1)=1 \}} m_{\hat{i},i} \mathbbm{1}_{\{\exists \hat{i}\ s.t.\ d_{\hat{i},j}(t)=1\}}
    \big]. \nonumber
\end{aligned}
\end{equation}
\textcolor{purple}{
The migration cost for SP $j$ kicks in if it is to be placed at location $i$ at time $t$ and was not placed there at $t-1$ 
(i.e. $d_{i,j}(t-1)=0\ \&\ d_{i,j}(t)=1)$. It consists of either downloading SP $j$ from the cloud server for a new deployment when $d_{\hat{i},j}(t-1) =0,\ \forall \hat{i}$, or migrating SP $j$ from the nearest location $\hat{i}$ when $\exists \hat{i}\ s.t.\ d_{\hat{i},j}(t)=1$. 
The \textbf{total storage cost} of storing all the deployed SPs is}
\begin{equation}
    S^{sh}(t)= \sum_{j \in SP} \sum_{i \in A} h_j \rho_i \mathbbm{1}_{\{l_{\text{loc}\ i}^{SP j}(t)=1\}},
    \label{eq:storage_Shared}
\end{equation}
\textcolor{purple}{where $h_j$ is the size of service profile type $i$ and $\rho_i$ is the per unit storage cost at AP $i$. The \textbf{failure cost} is incurred for the users who need to use service profile type $j$ when either a) SP type $j$ is not placed anywhere ($l_{\text{loc}\ i}^{SP j}=0\ \forall i$), b) the delay exceeds $del^j$,  the delay threshold for this application ($\min_{ \hat{i} \in \{ \hat{i} | l^{SP j}_{\text{loc}\ \hat{i}}=1\}} d^{\text{comm}}_{i,\hat{i} } > del^j\}$), or c) there is a failure at a location $\hat{i}$, and there is no available SP $j$ placed elsewhere, or the minimum delay elsewhere exceeds the delay threshold $del^j$ for this application.}
\begin{equation}
\begin{aligned}
    &F^{sh}(t) =\sum_{ j \in SP, i \in A} u_{ij} [ \mathbbm{1}_{\{l_{\text{loc}\ i}^{SP j}=0\ \forall i\}} \\
    & \vee \mathbbm{1}_{\{ \min_{ \hat{i} \in \{ \hat{i} | l^{SP j}_{\text{loc}\ \hat{i}}=1\}} d^{\text{comm}}_{i,\hat{i} } > del^j\}} \notag \vee \Pi_{\hat{i}} ( \mathbbm{1}_{\{ f_i(t)=1 \}} \\ 
    & \wedge ( \mathbbm{1}_{\{l_{\text{loc}\ i}^{SP j}=0\ \forall \hat{i} \neq i\}} \notag \vee \mathbbm{1}_{\{ \min_{ \hat{i} \in \{ \hat{i} \neq i | l^{SP j}_{\text{loc}\ \hat{i}}=1\}} d^{\text{comm}}_{i,\hat{i} } > del^j\}} ))] \notag
    \label{eq:failureCost_shared}
\end{aligned}
\end{equation}

\section{Cost Minimization in Light of Rare Events}
\label{section:OptProb}
\textcolor{purple}{In this section, we introduce the service migration optimization problems, for both the individually used (Section \ref{section:SysModel}) and shared (Section \ref{section:SysModel2}) service profile scenarios.}
Our focus is on long term service placements and backup storage decisions, taking into account potential rare events such as server shutdowns \textcolor{magenta}{($f^k_{ind}(t)=1$)} which have an impact on the user experience.
\textcolor{purple}{For both problems,} we aim to minimize the expected sum
of the 
\textbf{\textcolor{violet}{weighted sum of the communication and computing delay}, migration, storage and failure costs}.
\textcolor{blue}{Actions are taken in a digital twin framework, as they allow for algorithm training while avoiding the cost of failures in real-world applications. 
}
\textcolor{purple}{Firstly, we introduce the optimization problem for the \textcolor{magenta}{scenario in Section \ref{section:SysModel}} where all users have their individual service profiles.} 

\begin{equation}
\begin{aligned}
\min_{\pi(s)}\ &
\mathbb{E}_{\pi}[\sum_t (w_D D(t)+ w_C C(t) + w_M M(t)+w_B B(t)\\ 
&+w_F F(t))] \label{eq:cost}
\end{aligned}
\end{equation}
\begin{equation}
\begin{aligned}    
\text{s.t.}\ & [s_1(t), .. s_k(t),... s_K(t)]
\sim p(s'|s,a).\\ 
& \text{Eqs.} (\ref{eq:CommDelay1User}), (\ref{eq:switchingCost}), 
(\ref{eq:compdelay}),(\ref{eq:backupCost}),
(\ref{eq:failureCost}),
\end{aligned}
\end{equation}
\textcolor{purple}{where $w_D, w_C,$ etc. are the corresponding weights, which can be adjusted based on the priority of different costs.}
The decision variable $\pi(s,a)=\{P_r(a_t=a|s_t=s)\}$ is the network operator's policy, the probability it will take each action $a(t)$ (service placement and backup storage decision for all users) given the state $s(t)$. 
The first constraint indicates that the state dynamics follows the transition probability of the system in Eq. (\ref{eq:OverallTransProb_Rewritten}), incorporating rare event transitions, while the other constraints indicate the delay, migration, backup and failure costs respectively.

Next, we introduce the optimization problem for the \textcolor{magenta}{scenario in Section \ref{section:SysModel2} where users share service profiles}; \textcolor{cyan}{note that we aim to minimize the same types of costs as in Eq.~\eqref{eq:cost} with individual service profiles}.
\begin{equation}\label{eq:OF2}
\begin{aligned}
\min_{\pi(s)}\ &
\mathbb{E}_{\pi}[\sum_t (w_D D^{sh}(t)+ w_C C^{sh}(t) + w_M M^{sh}(t)\\ & + w_S S^{sh}(t)
+F^{sh}(t))]
\end{aligned}
\end{equation}
\begin{equation}
\begin{aligned}
\text{s.t.}\ & [s_{1,1}(t), .. s_{i,j}(t),... s_{I,J}(t)] 
\sim p(s'|s,a).\\ 
& \text{Eqs.} (\ref{eq:CommDelay_Shared}), (\ref{eq:MigrationCost_Shared}), (\ref{eq:CompDelay_Shared}),
(\ref{eq:storage_Shared}),
(\ref{eq:failureCost_shared}).
\label{eq:constraint-ProbDist}
\end{aligned}
\end{equation}
The decision variable $\pi(s,a)$ is the network operator's policy, the probability it will take each action $a(t)$ (placement decisions of the service profiles ) given the current state $s(t)$. 
The first constraint indicates that the state dynamics follows the system's transition probability, which incorporates rare event transitions, while the other constraints indicate the delay, migration, storage and failure costs.
\textcolor{red}{
\begin{proposition}
The number of migration path possibilities grows at least fast as $O(N^T)$.
\end{proposition}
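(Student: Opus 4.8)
The plan is to establish the stated growth rate by a direct counting argument based on the multiplication (product) rule, exhibiting a lower bound on the number of distinct migration paths. First I would fix a finite horizon of $T$ timeslots and define a \emph{migration path} to be a sequence of service-placement decisions $(l_s(1), l_s(2), \dots, l_s(T))$, one per timeslot, where each $l_s(t)$ ranges over the $N$ access points. This is the natural object induced by the action $a(t) = \{l_s(t+1), b_u(t+1)\}$: as the operator chooses $l_s(t+1)$ at every step, the concatenation of these choices over the horizon traces out one admissible trajectory of the service profile across the network.

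The core step is to count these trajectories. Since the placement at each timeslot can in principle be any of the $N$ access points, and since the choices at distinct timeslots are made independently, the product rule yields $\underbrace{N \times N \times \dots \times N}_{T} = N^{T}$ distinct service-location sequences. To turn this into a valid lower bound on the \emph{full} migration-path count, I would observe that each action additionally selects a backup position $b_u(t+1) \in \{0,1,\dots,N,\text{no backup}\}$, so the true number of joint service-and-backup trajectories is strictly larger — multiplying in the backup choices at each of the $T$ steps. Consequently, restricting attention to the service component alone already certifies at least $N^{T}$ paths, which is the bound we seek.

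Finally I would translate the count into the asymptotic statement: because the number of migration paths is bounded below by $N^{T}$, it grows at least as fast as $N^{T}$ in the horizon $T$, establishing the claim. I expect the only real subtlety — rather than a genuine mathematical obstacle — to be definitional: pinning down precisely which decisions constitute a ``migration path'' (service only, versus service plus backup), and noting that any per-timeslot feasibility constraints on the action set can only \emph{remove} candidate paths. Hence counting the unconstrained service trajectories is already sufficient to certify a lower bound of order $N^{T}$, and tightening the constant or exponent by folding in the backup placements would only strengthen the conclusion.
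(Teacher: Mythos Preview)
Your proposal is correct and matches the paper's own argument: both count $N$ service-placement choices per timeslot over $T$ timeslots and apply the product rule to obtain $N^{T}$. Your write-up is more detailed (explicitly defining the trajectory, noting that backups only enlarge the count, and discussing constraints), but the underlying idea is identical to the paper's two-line proof.
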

\begin{proof}
For each timeslot $t$, there are $N$ possible locations to migrate a service profile. Since the service profile must be placed at a server each timeslot, the result follows.
\end{proof}
Deciding a migration path for the user across $T$ timeslots is similar to the shortest path problem. 
However, the network operator may not have information on mobility distributions and the corresponding latency costs, complicating look-ahead solutions. 
Furthermore, the costs (e.g. computing delays, Eqs. (\ref{eq:compdelay}) and (\ref{eq:CompDelay_Shared})) of different users are coupled through service profiles' sharing of AP resources.  
Thus, Markov Decision Processes and RL have been used to derive service migration policies \cite{wang2019delay, RLnetworkEdge, 8705822}. 
The difficulty in our problem further arises as we model occasional server failures. 
As these failures occur at a low probability, and because RL relies on past reward data, the learner might overlook rare events and fail to learn an optimal failure-adaptive policy. 
We introduce our solution in the next section.
} 

\section{RL in the presence of Server Failures}
\label{section:AlgoSolution}
In this section, we present FIRE: Failure-adaptive Importance sampling for Rare Events, a RL based framework in an edge computing digital twin. 
\textcolor{magenta}{Our algorithm is applied to solve both optimization problems introduced in Section \ref{section:OptProb}.}
\textcolor{teal}{While \cite{precup2000eligibility} and \cite{frank2008reinforcement} also use importance sampling to sample rare events, these two works involve estimating the value function given a fixed policy. Their proof techniques also do not generalize to the $Q$-learning. In contrast, our algorithms involve learning the optimal policy $\pi^*$, which involves policy changes. Based on our algorithm, we propose novel boundedness and convergence proofs.}
We 
present our algorithm (FIRE-ImRE) and its proof in Subsections \ref{subsection:AlgoDescription} and \ref{subsection:algoProof} respectively, and present an eligibility traces version (ETAA) for faster convergence in \textcolor{cyan}{the Appendix}.
We propose a Deep Q learning and Actor Critic version of our algorithm in Section \ref{section:FnApproxDQN} \textcolor{cyan}{to accommodate large state and action spaces.} 
\textcolor{teal}{FIRE is compatible with any RL formulation, and our algorithms are examples of this.}

\subsection{\textcolor{brown}{Rare Events and the Value Function}} 
\label{subsection:DefineRareEv}
\textcolor{magenta}{We define rare event states as occasions when service fails: $f^k_{ind}(t)=1$ in Section \ref{section:SysModel} or $f_i(t)=1$ in Section \ref{section:SysModel2}.}
We are concerned with rare event states if they have a sizeable impact on the user experienced costs, 
i.e. if these states collectively have an impact on the value function. 
The value function $V^{\pi}(s): S \rightarrow \mathbb{R}$ for the policy $\pi$ is the expected-return of state $s$, i.e. it indicates how ``good" (cost wise) a state is (based on the discounted sum of future rewards), when $\pi$ is used. The discount factor $\gamma$ indicates how much the network operator cares about future vs current rewards. 
\begin{equation*}
\label{eq:value_def}
    V^{\pi}(s)=\mathbb{E}_{a_t,\space  p(s'|s,a)}\left[\sum_{t=1}^{\infty} \gamma^k r( s(t), a(t), a(t+1))| s_0=s\right]
\end{equation*}
where the reward (cost) is the per-timeslot cost  
in (\ref{eq:cost}) or (\ref{eq:OF2}). 
The value function is the solution to the recursive Bellman equation \cite{sutton2018reinforcement}:
\begin{equation*}
    V^{\pi}(s) = \sum_{a \in A} \pi(s,a) \sum_{s' \in S} p(s'|s,a)[r(s,a,s')+\gamma V^{\pi}(s')].
\end{equation*}
Defining $T^{\pi}(s)$ as the collective contribution of the rare states towards $V^{\pi}(s)$ \cite{frank2008reinforcement} given a fixed policy $\pi$, \textcolor{cyan}{where the rare states $T$ are as defined in Section~\ref{subsection:SvcPlacementModel},} we have
\begin{equation}
    T^{\pi}(s)=\epsilon(s) \sum_{a \in A} \sum_{ \tilde{s}' \in \tilde{S} } h(\tilde{s}'|s,a)\pi(a|s)[ r(s,a,s')+ \gamma V^{\pi}(s')] 
    \label{eq:T_eqnDef}
\end{equation}
Likewise, defining $U^{\pi}(s)$ as the collective contribution of the non-rare states (states in $S\backslash T$) towards the value of state $s$, given a fixed policy $\pi$, we have :
\begin{equation}
\begin{aligned}
    &U^{\pi}(s)=(1-\epsilon(s) )\times\\
    &\sum_{a \in A} \sum_{\tilde{s}' \in \tilde{S}} \pi(a|s) h(\tilde{s}'|s,a)[r(s,a,s')+ \gamma V^{\pi}(s')].
    \label{eq:U_eqnDef}
    \end{aligned}
\end{equation}

The Q-value of a state action pair $(s,a)$ is the expected-return of taking action $a$ at state $s$, following $\pi$. 
\begin{equation*}
    Q^{\pi}(s,a)=\mathbb{E}_{\pi}[\sum_{k=1}^{\infty} \gamma^k r( s(t), a(t), s(t+1))| s_t=s, a_t=a]
\end{equation*}
\begin{equation*}
   = \sum_{s'\in S}p(s'|s,a)[r(s,a,s')+\gamma \sum_{a'\in A} \pi(a'|s') Q^{\pi}(s',a')]
\end{equation*}
The optimal policy will achieve the minimum cost and minimum value function:
\begin{equation}
    Q^*(s,a)=\min_{\pi} Q^{\pi}(s,a).
\end{equation}

\subsection{FIRE-ImRE: Q-Learning in Light of Rare Events}
\label{subsection:AlgoDescription}

In this subsection, we present our importance sampling based failure-adaptive algorithm.
\textcolor{red}{We use a digital twin-based simulator in order to avoid the large failure costs experienced during server failures in online scenarios,}
\textcolor{violet}{and the converged policy trained in the digital twin is then applied to online scenarios, in which rare events happen at their natural probabilities.}
Our algorithm performs importance sampling while increasing the rare event probabilities from $\epsilon(s)$ to $\hat{\epsilon}(s)$. The transition probabilities $h(\tilde{s}|s,a)$, which mostly depend on the user mobility probabilities (Eq. (\ref{eq:intermediateTransProb})), are obtained from historical data. 
We use importance sampling because if rare events such as server failures are sampled at their natural probabilities, \textcolor{brown}{the RL algorithm is not able to converge to the optimal policy\textcolor{cyan}{; in particular,} it may decide that the storage cost isn't worth having backups. This can be catastrophic in the event of failures. In our numerical simulations in Section \ref{section:Simul}, we show that unlike our algorithm, a traditional RL algorithm is unable to train an optimal policy which avoids experiencing high cost in the event of failures (Figs. \ref{fig:compAlgo_AC} - \ref{fig:sharedSP_results}).}

\textcolor{brown}{\textbf{Idea behind algorithm:}} 
\textcolor{orange}{Firstly, we introduce importance sampling: when we want to calculate the expectation of a function $f(x)$, and the true distribution $p$ is difficult to sample from, importance sampling involves sampling from another distribution $q$, to help compute $E[f(x)]$ \cite{bucklew2004introduction}. Here, the expectation of $f(x)$ can be estimated as }
\begin{align}
E[f(x)] & =\int f(x) p(x) d x \\
&=\int f(x) \frac{p(x)}{q(x)} q(x) d x \\ & \approx \frac{1}{n} \sum_{i} f\left(x_{i}\right) \frac{p\left(x_{i}\right)}{q\left(x_{i}\right)}.
\end{align}
\textcolor{orange}{When $q$ is well crafted, variance in the estimates will be reduced. 
For our service migration and backup placement problem, we will sample the rare events at probability $\hat{\epsilon}(s)$ (analogous to $q(x)$), a higher rate than $\epsilon(s)$ (analogous to $p(x)$), which is the probability of visiting a rare event state from state $s$.}
\textcolor{brown}{Therefore, to obtain $\hat{\epsilon}(s)$,} we define $T(s,a)$ as the contribution of the rare states $s' \in T$ towards the Q-value of the state action pair $(s,a)$, and $U(s,a)$ as the contribution of the normal states $s' \in S \backslash T$ towards the Q-value of the state action pair $(s,a)$:
\begin{equation}
    T(s,a)=\epsilon(s) \sum_{\tilde{s}'\in \tilde{S}} h(\tilde{s}'|s,a)[r(s,a,s')+\gamma \max_b Q(s',b)]
\end{equation}
\begin{equation}
    U(s,a)=(1-\epsilon(s)) \sum_{\tilde{s}'\in \tilde{S}} h(\tilde{s}'|s,a)[r(s,a,s')+\gamma \max_b Q(s',b)]
\end{equation}
where $s'= \tilde{s}' \cup f_{ind}$. 
With this, we have the relationship $T(s,a)+U(s,a)=Q(s,a).$
A potential update equation 
would be
$T(s,a) \gets (1-\alpha_T) T(s,a)+\alpha_T \epsilon(s) (r(s,a,s') +\gamma \max_b Q(s',b))$, where $\alpha_T$ is the learning rate.

Updating at the state action $(s,a)$ level may lead to slower convergence, \textcolor{brown}{especially with an increasing number of \textcolor{magenta}{APs}. 
For example, with 9 \textcolor{magenta}{APs}, there will be 360 states and 90 actions, leading to 32400 state-action combinations.}
We can reduce this complexity by exploiting our earlier observation that \emph{the probability of rare events (i.e., server failures) happening is independent of the actions chosen}. 
Therefore, we propose updating $T$ and $U$ (the contributions of the rare events and non-rare events respectively towards the Q-value) 
at the state level:
\begin{equation*}
\begin{split}
    &T(s)=\\
    &\epsilon(s) \sum_{a \in A} \sum_{\tilde{s}'\in S} \pi (a|s) h(\tilde{s}'|s,a) [r(s,a,s')+\gamma \max_b Q(s',b)]
    \end{split}
\end{equation*}
\begin{align*}
    U(s)&=(1-\epsilon(s)) \times \sum_{a \in A} \sum_{\tilde{s}'\in S} \pi (a|s) h(\tilde{s}'|s,a) \big[r(s,a,s') \\
    & +\gamma \max_b Q(s',b)\big]
\end{align*}
These equations contain a sum over the actions.
The relationship $U(s)+T(s)=\sum_a Q(s,a) \pi(a|s)$ holds. 
To help us obtain the importance sampling rare event rate $\hat{\epsilon}(s)$,
We will use the following respective update equations for $T(s)$ and $U(s)$ in our algorithm:
\begin{equation*}
T(s) \gets (1-\alpha_T) T(s) + \alpha_T \epsilon(s) (r(s,a,s')+\gamma \max_b \hat{Q}(s',b))
\end{equation*}
\begin{equation*}
\begin{aligned}
    U(s) \gets &(1-\alpha_U) U(s)\\
    & + \alpha_U (1-\epsilon(s)) (r(s,a,s')+\gamma \max_b \hat{Q}(s',b)).
\end{aligned}
\end{equation*}

\textcolor{brown}{\textbf{Importance sampling and correction:}}  Based on the above,
we will calculate the rare event importance sampling rate $\hat{\epsilon}(s)$ as follows:
\begin{equation}
\hat{\epsilon}(s) \gets \min(\max(\delta, \frac{|T(s)|}{ |T(s)|+|U(s)|} ) ,1-\delta).
\end{equation}
The bounds $(\delta,1-\delta)$ ensure sufficient rare event sampling. 
As importance sampling of rare events takes place according to $\hat{\epsilon}(s)$, 
$\hat{p}(s'|s,a)$ is the transition probability in our algorithm which incorporates importance sampling. \textcolor{orange}{It follows Eq. (\ref{eq:intermediateState}), with $\epsilon(s)$ being replaced by $\hat{\epsilon}(s)$.} 



Because the rare events are sampled at the probability $\hat{\epsilon}$ instead of their actual probability $\epsilon$, we need a method of correction, in order to learn the optimal policy for the original system with transition probability $p(s'|s,a)$ (Eq. (\ref{eq:OverallTransProb_Rewritten})). 
Importance sampling correction weights \textcolor{orange}{$w(s,a,s')$ will be used}, when obtaining the temporal-difference (TD) error in Q-learning.
They are obtained through:
\begin{equation}
    w(s,a,s') \gets \begin{cases} \epsilon(s)/ \hat{\epsilon}(s), & \text{if} \ s' \in T,\\
    (1-\epsilon(s))/(1-\hat{\epsilon}(s)), & \text{if} \ s' \notin T,
    \end{cases}
    \label{eq:ImpWeight}
\end{equation}
where $1-\epsilon(s)$ is the probability that $s'$ is a non-rare event state.
The TD error will be 
\begin{equation}
    w_t (r(s,a,s') + \gamma \max_b\hat{Q}(s',b)) - \hat{Q}(s,a),
\end{equation}
instead of the traditional Q-learning TD error $ r(s,a,s')+ \gamma \max_b \hat{Q}(s',b) - \hat{Q}(s,a)$.

\textbf{Algorithm description:} The algorithm is presented in Algorithm \ref{algo:combined} (\textbf{FIRE-ImRE}). 
Firstly, we initialise $\hat{Q}(s,a)$, $\hat{T}(s)$ and $\hat{U}(s)=0$, and $\hat{\epsilon}(s)=\frac{1}{2}$, for all states, and initialise the learning rates $\alpha^t,\alpha^t_T, \alpha^t_U$.
For every timeslot, 
the importance sampling rare event probability $\hat{\epsilon}(s)$ will determine whether or not a rare-event (failure) occurs. Thereafter, the rest of the state transition will occur according to the probability distribution $h(\tilde{s}|s,a)$, where $\tilde{s}(t)=(l_u(t), l_s(t), b_{ind}(t))$.
This would result in a new state $s^{t+1}$, and a reward value $r^{t+1}=r(s^t,a^t,s^{t+1})$ (line 4).
Based on the next state $s^{t+1}$, a new action is selected according to the $\beta$-greedy policy (lines 5-6), which is the same as the traditional $\epsilon$-greedy policy in Q-learning, where with probability $\beta$ the greedy action is selected and with probability $1-\beta$ a random action is selected, for exploration. We call it $\beta$-greedy to avoid confusion with our rare event probability $\epsilon$.

Next, we obtain the importance weight $w_t$ for this timeslot. The importance weight is the actual probability divided by the importance sampling - based probability (line 7).
The temporal-difference (TD) error $\triangle_t$ will be updated (line 8), while involving error correction using the importance sampling weight $w_t$. 
With the TD-error, we update $\hat{Q}(s^t,a^t)$, the Q-value for state $s^t$ and action $a^t$ (line 9). The size of update is determined by the learning rate $\alpha^t$.
We then update either $T(s^t)$ or $U(s^t)$, depending on whether the next state $s^{t+1}$ is a rare event state or not (lines 10-14).
Finally, the importance sampling rare events probability $\hat{\epsilon}(s^t)$ is updated in line 15, according to $\frac{|T(s^t)|}{ |T(s^t)|+|U(s^t)|}$, and bounded by $\delta$ and $1-\delta$.
The process iterates for every timeslot, until we have achieved convergence.
\textcolor{violet}{The converged policy will be applied online.}

\begin{algorithm}
\caption{FIRE-ImRE: Importance Sampling Q-Learning for Rare Events}\label{algo:combined}
\begin{algorithmic}[1]
\State \textbf{Initialise:} $\hat{Q}(s,a)$ randomly, $\hat{T}(s), \hat{U}(s) \gets 0$, $\hat{\epsilon}(s) \gets \frac{1}{2}, \alpha^t,\alpha^t_T, \alpha^t_U.$ 
\State Select the initial state $s^0$ and action $a^0$.
\For{all timeslots $t$}
\State $\hat{\epsilon}(s^t)$ determines if an rare event happens. Thereafter, sample according to $h(\tilde{s}|s,a)$. The new state $s^{t+1}$ and a reward value $r^{t+1}=r(s^t,a^t,s^{t+1})$ is observed. 
\State $a^{t+1} \gets \beta-\text{greedy}(s^{t+1}).$ 
\State $a^* \gets \text{argmax}_b \hat{Q}(s^{t+1},b).$ 
\State $w_t \gets \begin{cases} \epsilon(s^t)/ \hat{\epsilon}(s^t), & \text{if} \ s^{t+1} \in T.\\
    (1-\epsilon(s^t))/(1-\hat{\epsilon}(s^t)), & \text{if} \ s^{t+1} \notin T.
    \end{cases}$
\State $\triangle_t \gets w_t (r^{t+1} + \gamma \hat{Q}(s^{t+1},a^*)) - \hat{Q}(s^{t},a^t)$
\State $\hat{Q}(s^t,a^t) \gets \hat{Q}(s,a) + \alpha^t \triangle_t $.
\If{$s^{t+1} \in T$}
\State $T(s^t) \gets (1-\alpha^t_T) T(s^t) + \alpha^t_T \epsilon(s^t) (r^{t+1}+\gamma \hat{Q}(s^{t+1},a^*))$
\Else
\State $U(s^t) \gets (1-\alpha^t_U) U(s^t) + \alpha^t_U (1-\epsilon(s^t)) (r^{t+1}+\gamma \hat{Q}(s^{t+1},a^*))$
\EndIf
\State $\hat{\epsilon}(s^t) \gets min(max(\delta, \frac{|T(s^t)|}{ |T(s^t)|+|U(s^t)|} ) ,1-\delta)$
\EndFor

\end{algorithmic}
\end{algorithm}

Our algorithm works for the scenario where $\exists s, \text{s.t.} |T(s)| \gg 0$.
See Statement 2 in the definition of rare events in Definition 1.
This condition means that the rare events have a sizeable contribution to the Q-value of at least one other state action pair $(s,a)$, 
and hence 
collectively have a sufficient impact on the reward/cost of the system. 
When this condition is not met, the cost of rare events to the users and network operator is not sufficiently high enough, resulting in less of a need to deal with rare events \textcolor{cyan}{and use importance sampling to capture their impact on the cost}. 

\subsection{Boundedness and Convergence Properties}
\label{subsection:algoProof}
In this subsection, we prove the convergence of \textbf{FIRE-ImRE}, our importance sampling based Q-learning algorithm (Algorithm \ref{algo:combined}). 

Firstly, we show that the sequence of updates in Algorithm \ref{algo:combined} is bounded, in the following theorem. 

\begin{theorem}
For stepsizes that satisfy $\sum_t \alpha(t) = \infty$ and  $\sum_t \alpha^2 (t) \leq \infty$, and discount factor $\gamma \in (0,1)$, 
the sequence of the $\hat{Q}$ updates in Algorithm \ref{algo:combined} is bounded, with probability 1.
\end{theorem}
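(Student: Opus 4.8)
The plan is to recognize Algorithm~\ref{algo:combined}'s $\hat{Q}$-update as an asynchronous stochastic approximation recursion and then invoke a standard boundedness-and-convergence lemma for such recursions (of the Jaakkola--Jordan--Singh / Tsitsiklis type, as used for ordinary $Q$-learning in~\cite{sutton2018reinforcement}). First I would rewrite line~9 in the canonical form
\[
\hat{Q}_{t+1}(s^t,a^t) = (1-\alpha^t)\,\hat{Q}_t(s^t,a^t) + \alpha^t\, G_t, \qquad G_t := w_t\big(r^{t+1} + \gamma \max_b \hat{Q}_t(s^{t+1},b)\big),
\]
with every other entry of $\hat{Q}$ held fixed at that timeslot. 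Since only the visited pair $(s^t,a^t)$ is updated, the recursion is genuinely asynchronous, and the stepsize hypotheses $\sum_t \alpha(t)=\infty$, $\sum_t \alpha^2(t)<\infty$ are assumed to hold per component, with every $(s,a)$ visited infinitely often under the $\beta$-greedy exploration.

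The decisive step is to compute the conditional expectation of the noisy target and verify that importance sampling cancels the sampling bias exactly. Because the simulator draws transitions from $\hat{p}(s'|s,a)$ while $w_t$ is precisely the likelihood ratio $p(s'|s,a)/\hat{p}(s'|s,a)$ induced by Eq.~(\ref{eq:ImpWeight}) (the $\hat{\epsilon}$ and $1-\hat{\epsilon}$ factors in $\hat{p}$ cancel against the $1/\hat{\epsilon}$ and $1/(1-\hat{\epsilon})$ in the weight),
\[
\mathbb{E}\!\left[G_t \mid s^t=s,\, a^t=a\right] = \sum_{s'}\hat{p}(s'|s,a)\,\frac{p(s'|s,a)}{\hat{p}(s'|s,a)}\big[r + \gamma \max_b \hat{Q}_t(s',b)\big] = (H\hat{Q}_t)(s,a),
\]
where $H$ is the ordinary Bellman optimality operator under the \emph{true} dynamics $p$ of Eq.~(\ref{eq:OverallTransProb_Rewritten}). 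Crucially this identity holds for \emph{whatever} value $\hat{\epsilon}_t(s)\in[\delta,1-\delta]$ is current, because $w_t$ is formed against the same $\hat{\epsilon}_t$ used to draw $s^{t+1}$; the adaptive, history-dependent evolution of $\hat{\epsilon}$ through the $T,U$ recursions therefore does not disturb the drift term. Since $H$ is a $\gamma$-contraction in the sup norm with $\gamma\in(0,1)$, the expected increment contracts toward the unique fixed point $Q^*$, supplying the drift condition the lemma requires.

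It then remains to control the noise. The clipping $\hat{\epsilon}(s)\in[\delta,1-\delta]$ makes both branches of $w_t$ bounded, $0\le w_t\le 1/\delta$, and the per-slot cost $r$ is bounded because each of $D(t),M(t),B(t),F(t)$ is a finite sum of bounded terms. Hence the martingale-difference noise $G_t-\mathbb{E}[G_t\mid\mathcal{F}_t]$ has conditional second moment growing at most quadratically in $\|\hat{Q}_t\|_\infty$, i.e.\ bounded by $C(1+\|\hat{Q}_t\|_\infty^2)$. With the contraction property, the quadratic noise bound, and the stepsize conditions all in hand, the lemma yields that $\{\hat{Q}_t\}$ is bounded (indeed convergent) with probability one.

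The main obstacle I anticipate is precisely that the importance weights may exceed one: for an over-sampled normal transition $w_t=(1-\epsilon(s))/(1-\hat{\epsilon}(s))\ge 1$, so the pathwise per-step factor $w_t\gamma$ can be larger than $1$ and no deterministic contraction holds. Boundedness therefore cannot be obtained by a naive worst-case induction on $\|\hat{Q}_t\|_\infty$; it hinges entirely on the weights averaging out in conditional expectation to restore the $\gamma$-contraction of $H$. Making this cancellation rigorous while simultaneously confirming that the clipping keeps the noise variance quadratically bounded uniformly over the adaptively changing $\hat{\epsilon}$ is the delicate part of the argument.
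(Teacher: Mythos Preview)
Your proposal is correct and follows the same high-level strategy as the paper---cast the update as an asynchronous stochastic approximation and invoke a Tsitsiklis-type boundedness lemma---but you choose a different decomposition of the target. You center the recursion around the Bellman optimality operator $H$ under the \emph{true} dynamics $p$, using the importance-sampling cancellation $\hat p\cdot w=p$ to make $\mathbb{E}[G_t\mid\mathcal{F}_t]=(H\hat Q_t)(s,a)$ exactly, so that the residual $G_t-\mathbb{E}[G_t]$ is a genuine martingale difference. The paper instead adds and subtracts $\sum_{s'}\hat p(s'|s,a)[r+\gamma\max_y\hat Q(s',y)]$ and works with the operator $F(\hat Q)(s,a)=\sum_{s'}\hat p(s'|s,a)[r+\gamma\max_y\hat Q(s',y)]$ built from the \emph{simulator} dynamics $\hat p$, pushing the importance weight entirely into the residual $M(t)$; it then verifies $\|F(\hat Q)\|_\infty\le\gamma\|\hat Q\|_\infty+d$ directly and bounds $\|M(t)\|\le K(1+\|\hat Q^t\|)$ via the clipping $\hat\epsilon\in[\delta,1-\delta]$. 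Your route has the advantage that the mean-zero noise property is transparent and the contraction is that of the standard $H$---indeed this is exactly the decomposition the paper itself adopts later for the convergence proof in Theorem~\ref{thr:optW/oEtraces}. The paper's route keeps the weight out of the drift term but then has to argue that the non-centered $M(t)$ still meets the lemma's hypotheses. Either decomposition delivers boundedness once the weight bound $w_t\le 1/\delta$ and the per-step reward bound are in hand.
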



\begin{proof}
\textcolor{teal}{See our Technical Report \cite{FIREappendix}.}
\end{proof}

The main idea of the proof is as follows: We rewrite Algorithm \ref{algo:combined}'s update equation, through adding and subtracting terms.
We rewrite the equation in the form of $\hat{Q}^{t+1}(s,a)= \hat{Q}^{t}(s,a) + \alpha^t[ F_{s,a} (\hat{Q}^t)- \hat{Q}^t(s^t,a^t)+M_s(t)]$.
Invoking Theorem 1 of \cite{tsitsiklis1994asynchronous} on stochastic approximation algorithms, we finally prove convergence by showing that the conditions for the sequence $F_{s,a} (\hat{Q})$ to be bounded are met. 
We do this by showing that $||M(t)||$ is bounded, and that the property $||F(\hat{Q}) ||_{\infty} \leq b||\hat{Q} ||_{\infty} +d$ holds, where $|| \quad ||_{\infty} $ is the supremum norm. 

Next, to help us prove that our algorithm FIRE-ImRE converges to optimality, we invoke the following corollary from \cite{jaakkola1993convergence}:
\begin{corollary}
The random process $\{ \triangle_t \}$ taking values in $\mathbb{R}^n$ and defined as $\triangle_{t+1}(x)=(1-\alpha_t(x)) \triangle_t(x) + \alpha_t F_t(x)$
converges to zero with probability 1 when the following assumptions hold:
\begin{equation*}
    0\leq \alpha_t \leq 1, \sum_t \alpha_t(x) =\infty \ \text{and} \sum_t \alpha^2_t(x) < \infty.
\end{equation*}
\begin{equation*}
    || \mathbb{E}[F_t(x)] ||\leq \gamma || \triangle_t ||, \text{where}\ \gamma <1. 
\end{equation*}
\begin{equation*}
    var[F_t(x)] \leq C(1+|| \triangle_t ||^2),\ \text{for}\ C>0.
\end{equation*}
\label{corollary:ConvergenceofRandomProcess}
\end{corollary}
In the following theorem, we show that our algorithm converges, and that it converges to the optimal policy.


\begin{theorem}
If the MDP \textcolor{black}{underlying the RL environment} is unichain for $\epsilon \in ( \delta,1-\delta)$,\footnote{$( \delta,1-\delta)$ is the range of values which $\hat{\epsilon}$ takes, as defined in Algorithm \ref{algo:combined}. If the MDP defined by transition probability $p(s'|s,a)$ is unichain for one value in $( \delta,1-\delta)$, it is unichain for all values \cite{frank2008reinforcement}.} 
for stepsizes satisfying $0 \leq \alpha_t \leq 1$, $\sum_t \alpha_i(t) = \infty$ and  $\sum_t \alpha_i^2 (t) \leq \infty$, and discount factor $\gamma \in (0,1)$,
Algorithm \ref{algo:combined} converges to the optimal policy $Q^*$.
\label{thr:optW/oEtraces}
\end{theorem}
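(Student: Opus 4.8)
The plan is to recast Algorithm \ref{algo:combined}'s $\hat{Q}$-update into the canonical form of Corollary \ref{corollary:ConvergenceofRandomProcess} and verify its three hypotheses, mirroring the standard $Q$-learning convergence argument~\cite{sutton2018reinforcement, jaakkola1993convergence}. First I would define the error process $\triangle_t(s,a) = \hat{Q}^t(s,a) - Q^*(s,a)$, where $Q^*$ is the optimal action-value function of the \emph{true} MDP with transition law $p(s'|s,a)$ in Eq. (\ref{eq:OverallTransProb_Rewritten}). Subtracting $Q^*(s^t,a^t)$ from both sides of the update in line~9 and setting $F_t(s^t,a^t) = w_t\bigl(r^{t+1} + \gamma \max_b \hat{Q}^t(s^{t+1},b)\bigr) - Q^*(s^t,a^t)$ puts the recursion exactly in the form $\triangle_{t+1} = (1-\alpha^t)\triangle_t + \alpha^t F_t$ of Eq. (\ref{eq:RandProcessCorollaryForm}). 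The stepsize hypothesis is then immediate from the theorem's assumptions, once the unichain condition together with the $\beta$-greedy exploration is used to guarantee that every state-action pair is visited infinitely often, so that the per-component Robbins--Monro conditions hold on each coordinate of the asynchronous iteration.

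The crux is the contraction hypothesis $\|\mathbb{E}[F_t]\| \leq \gamma\|\triangle_t\|$. Here I would exploit the defining property of the correction weight in Eq. (\ref{eq:ImpWeight}): the sampler in line~4 draws $s^{t+1}$ from $\hat{p}(s'|s^t,a^t)$ while $w_t = p(s'|s^t,a^t)/\hat{p}(s'|s^t,a^t)$, so the clipping $\hat{\epsilon}(s)\in(\delta,1-\delta)$ keeps both rare and normal transitions on the support of $p$ and the weighted backup is unbiased for the true-probability backup. Taking the conditional expectation over $s^{t+1}$ therefore collapses the weighted term to $(H\hat{Q}^t)(s^t,a^t)$, where $H$ is the optimal Bellman operator $(HQ)(s,a) = \sum_{s'\in S} p(s'|s,a)[r(s,a,s')+\gamma\max_b Q(s',b)]$ of the true MDP. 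Since $Q^*$ is the fixed point of $H$ and $H$ is a $\gamma$-contraction in the supremum norm, we obtain $\mathbb{E}[F_t] = H\hat{Q}^t - HQ^*$ and hence $\|\mathbb{E}[F_t]\|_\infty \leq \gamma\|\hat{Q}^t - Q^*\|_\infty = \gamma\|\triangle_t\|_\infty$.

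For the variance hypothesis $\mathrm{var}[F_t] \leq C(1+\|\triangle_t\|^2)$, I would bound the second moment of $F_t$ using the almost-sure boundedness of $\hat{Q}^t$ from the preceding boundedness theorem, the boundedness of the per-timeslot reward $r(s,a,s')$, and the boundedness of the weights $w_t$ (which holds because $\epsilon(s)$ is fixed and $\hat{\epsilon}(s)$ is clipped to $(\delta,1-\delta)$, exactly as argued there); with $\gamma<1$ this yields the required quadratic growth in $\|\triangle_t\|$. With all three hypotheses in place, Corollary \ref{corollary:ConvergenceofRandomProcess} gives $\triangle_t\to 0$ with probability~1, i.e. $\hat{Q}^t\to Q^*$, so the greedy policy with respect to $\hat{Q}^t$ converges to the optimal policy of the true MDP.

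I expect the contraction step to be the main obstacle, specifically making rigorous that the importance-sampling correction keeps the expected update equal to the true-MDP backup \emph{despite} $\hat{\epsilon}(s)$ being a learned, nonstationary quantity correlated with the history. The observation that defuses this is that $w_t$ cancels $\hat{p}$ pointwise, so the unbiasedness identity is exact conditional on $(s^t,a^t)$ and holds for \emph{every} admissible value of $\hat{\epsilon}(s)\in(\delta,1-\delta)$; the nonstationarity of the sampler thus enters only through the variance term, which the clipping bounds already control.
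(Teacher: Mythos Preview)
Your proposal is correct and follows essentially the same route as the paper's proof: recast the update as $\triangle_{t+1}=(1-\alpha^t)\triangle_t+\alpha^t F_t$, use the pointwise identity $w_t\,\hat{p}(s'|s,a)=p(s'|s,a)$ to show that the conditional mean of $F_t$ equals $H\hat{Q}^t-HQ^*$ for the true Bellman operator $H$ (hence the $\gamma$-contraction bound), bound the variance via the clipping of $\hat{\epsilon}$ and the boundedness of $r$, and invoke Corollary~\ref{corollary:ConvergenceofRandomProcess}. Your closing remark that the nonstationarity of $\hat{\epsilon}$ is harmless because $w_t$ cancels $\hat{p}$ for \emph{every} admissible value is exactly the observation the paper makes when it notes that both $\hat{p}$ and $w_t$ depend on $\hat{Q}$ but their product is always $p$.
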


\begin{proof}
\textcolor{teal}{See our Technical Report \cite{FIREappendix}.}
\end{proof}

The main idea of the proof is as follows.
Firstly, we show that the optimal policy of the true system 
is a fixed point of 
the following equation with our simulator's transition probability $\hat{p}(s'|s,a)$ and the importance sampling correction weight $w_t(s)$:
$Q^*(s,a)=\sum_{s' \in S} \hat{p}(s'|s,a) w_t(s) [r(s,a,s') +\gamma \max_b Q^*(s',b)].$
We define the operator $\mathbf{H}\hat{Q}=\sum_{s' \in S} \hat{p}(s'|s,a) w_t(s) [r(s,a,s') +\gamma \max_b \hat{Q}(s',b)]$,
and having shown that that $\mathbf{H} Q^*=Q^*$,
we next show that $\mathbf{H}\hat{Q}$ is a contraction mapping, through showing that $|| \mathbf{H} \hat{Q}_1- \mathbf{H} \hat{Q}_2||_{\infty} \leq \gamma ||\hat{Q}_1- \hat{Q}_2||_{\infty}$.  
Next, we invoke the result of Corollary \ref{corollary:ConvergenceofRandomProcess} on the convergence of random processes.
We rewrite the update equation of Algorithm \ref{algo:combined}, such that it fits the form of the equations 
in Corollary \ref{corollary:ConvergenceofRandomProcess}, and show that the conditions of Corollary \ref{corollary:ConvergenceofRandomProcess} hold. We do this by letting $G_t(s,a) = w^t[r(s,a, X(s,a))+\gamma \max_b \hat{Q}^t(X(s,a),b) ] - Q^*(s,a)$,
and by using the fact that $\mathbf{H}\hat{Q}$ is a contraction mapping, we show that $||\mathbb{E}[G_t(s,a)] ||_{\infty} \leq \gamma ||\hat{Q}^t(s,a)-Q^* ||_{\infty}=\gamma ||\triangle_t||_{\infty}$, and $\mathbf{var}[G_t(s,a)] \leq C (1+ || \triangle_t ||^2_{\infty})$.

\textcolor{teal}{\textbf{Rare Event Adaptive Eligibility Traces Algorithm (FIRE-ETAA):}
To speed up convergence for the tabular Q learning algorithm, we propose another variant of Algorithm \ref{algo:combined} using eligibility traces. 
This algorithm (\textbf{ETAA}) is a combination of our proposed importance sampling based Q-learning (Algorithm \ref{algo:combined}) and Watkin's $Q(\lambda)$ \cite{watkins1989learning}, which is an algorithm which combines Q-learning with eligibility traces.
We present this algorithm in our Technical Report \cite{FIREappendix}.}





\section{\textcolor{brown}{Scaling to Large State or Action Spaces}}
\label{section:FnApproxDQN}
To handle large state spaces, as might be found in real-world edge computing networks that span one or more cities and surrounding suburbs, we propose two variants of Algorithm \ref{algo:combined} using 
deep Q-learning (Algorithm FIRE-ImDQL) \textcolor{violet}{and actor critic (Algorithm FIRE-ImACRE)} respectively. 
\textcolor{brown}{The codes are available online at \cite{githubcodeFIRE}.} 

\textbf{Rare Event Adaptive Deep Q-learning algorithm (\textcolor{cyan}{FIRE-ImDQL,} Algorithm \ref{algo:DQN}):}
We modify and combine the classic deep RL algorithm in \cite{mnih2015human} with our importance sampling based rare events adaptive algorithm (Algorithm \ref{algo:combined}) by proposing an importance sampling based loss function. 
\begin{algorithm}[t]
\caption{FIRE-ImDQL: Importance Sampling for Rare Events using Deep Q Learning}\label{algo:DQN}
\begin{algorithmic}[1]
\State \textbf{Initialise:} $\hat{Q}(s,a, \boldsymbol{\theta}^{-})$ and $Q(s,a, \boldsymbol{\theta})$ randomly, $\hat{T}(s), \hat{U}(s) \gets 0$, $\hat{\epsilon}(s) \gets \frac{1}{2}.$ 
\State Select the initial state $s^0$ and action $a^0$.
\For{all timeslots $t$}
\State $\hat{\epsilon}(s^t)$ determines if an anomaly happens. Thereafter, sample according to $h(\tilde{s}|s,a)$. The new state $s^{t+1}$ and a reward value $r^{t+1}=r(s^t,a^t,s^{t+1})$ is observed. 
\State $a^{t+1} \gets \beta-\text{greedy}(s^{t+1}).$ 
\State $a^* \gets \text{argmax}_b Q(s^{t+1},b, \boldsymbol{\theta}).$ 
\If{$Q(s^{t+1},a^*, \boldsymbol{\theta})= Q(s^{t+1},a^{t+1}, \boldsymbol{\theta})$}
\State $a^* \gets a^{t+1}$
\EndIf
\State $w_t \gets \begin{cases} \epsilon(s^t)/ \hat{\epsilon}(s^t), & \text{if} \ s^{t+1} \in T.\\
    (1-\epsilon(s^t))/(1-\hat{\epsilon}(s^t)), & \text{if} \ s^{t+1} \notin T.
    \end{cases}$
\State target $y \gets w_t (r^{t+1} + \gamma \hat{Q}(s^{t+1},a^*,\boldsymbol{\theta}^{-})) $
\State $\boldsymbol{\theta}_{t+1} \gets \boldsymbol{\theta}_t + \alpha \nabla_{\boldsymbol{\theta}} \mathbb{E} [y- Q(s,a, \boldsymbol{\theta})]^2$

\State $T(s^t) \gets (1-\alpha_T) T(s^t) + \alpha_T \epsilon(s^t) (r^{t+1}+\gamma Q(s^{t+1},a^*, \boldsymbol{\theta}_{t+1}))$
\State $U(s^t) \gets (1-\alpha_U) U(s^t) + \alpha_U (1-\epsilon(s^t)) (r^{t+1}+\gamma Q(s^{t+1},a^*,\boldsymbol{\theta}_{t+1} ))$
\State $\hat{\epsilon}(s^t) \gets min(max(\delta, \frac{|T(s^t)|}{ |T(s^t)|+|U(s^t)|} ) ,1-\delta)$
\State Every K steps, reset $\hat{Q}=Q$.
\EndFor

\end{algorithmic}
\end{algorithm}
We parameterize the $\hat{Q}(s,a)$ value function using neural networks.
The loss function is 
\begin{equation*}
    L(\boldsymbol{\theta}) = \mathbb{E}_{s,a,s'} [w_t( r^{t+1}+\gamma \max_a \hat{Q}(s,a, \boldsymbol{\theta}^{-})) - Q(s,a, \boldsymbol{\theta}) ]^2,
\end{equation*}
where $\boldsymbol{\theta}^{-}$ are the weights of the target neural network,  $\boldsymbol{\theta}$ are the weights of the predicted neural network, which are updated every iteration, and $w_t$ is the importance sampling weight.
The loss function is optimized via gradient descent in line 13.
We are trying to minimize the difference between the target $w_t( r^{t+1}+\gamma \max_a \hat{Q}(s,a, \boldsymbol{\theta}^{-}))$ and the prediction $Q(s,a, \boldsymbol{\theta})$. 
The target network is updated only every $K$ iterations \textcolor{black}{(line 16)}, while the prediction network is updated every iteration. Having the target fixed for awhile helps in having stable training.
The key idea which differentiates our algorithm from traditional deep RL algorithms is that our target $w_t( r^{t+1}+\gamma \max_a \hat{Q}(s,a, \boldsymbol{\theta}^{-}))$ is ``error corrected" by the importance sampling weight $w_t$, and the rare event distribution is updated via $T, U$ and $\hat{\epsilon}$.
The variables $T$, $U$, $\hat{\epsilon}$ and $w_t$ are updated in the same manner as the tabular algorithm (Algorithm \ref{algo:combined}), through using the output of the prediction net $Q(s,a, \boldsymbol{\theta})$ where the input is state $s^{t+1}$. The actions are also selected in the same manner as Algorithm \ref{algo:combined}, through using the output of the prediction net $Q(s,a, \boldsymbol{\theta})$ where the input is state $s^t$.

\textcolor{violet}{
\begin{algorithm}[t]
\caption{FIRE-ImACRE: Importance Sampling based Advantage Actor Critic for Rare Events}\label{algo:AC}
\begin{algorithmic}[1]
\State \textbf{Initialise:} parameters $\boldsymbol{\theta}, \boldsymbol{c}$ randomly, $\hat{T}(s), \hat{U}(s) \gets 0$, $\hat{\epsilon}(s) \gets \frac{1}{2}$, learning rates $\alpha_{\boldsymbol{\theta}}, \alpha_{\boldsymbol{c}}, \alpha_T, \alpha_U$. 
\State Select the initial state $s^0$ and action $a^0$.
\For{all timeslots $t$}
\State $\hat{\epsilon}(s^t)$ determines if an anomaly happens. Thereafter, sample according to $h(\tilde{s}|s,a)$. The new state $s^{t+1}$ and a reward value $r^{t+1}=r(s^t,a^t,s^{t+1})$ is observed. 
\State Sample next action $a^{t+1} \sim \pi_{\boldsymbol{\theta}}(a|s^{t+1}) $ 
\State $w^t \gets \begin{cases} \epsilon(s^t)/ \hat{\epsilon}(s^t), & \text{if} \ s^{t+1} \in T.\\
    (1-\epsilon(s^t))/(1-\hat{\epsilon}(s^t)), & \text{if} \ s^{t+1} \notin T.
    \end{cases}$
\State $V(s^t) \gets \boldsymbol{c}(s^t)$
\State Advantage function $A(s^t,a^t) \gets r^{t+1} + \gamma V(s^{t+1}) - w^t(s^t,a^t,s^{t+1}) V(s^t)$
\State Store $A(s^t,a^t)$ in buffer.
\State $T(s^t) \gets (1-\alpha_T) T(s^t) + \alpha_T \epsilon(s^t) (r^{t+1}+\gamma V(s^{t+1}))$ 
\State $U(s^t) \gets (1-\alpha_U) U(s^t) + \alpha_U (1-\epsilon(s^t)) (r^{t+1}+\gamma V(s^{t+1}))$
\State $\hat{\epsilon}(s^t) \gets min(max(\delta, \frac{|T(s^t)|}{ |T(s^t)|+|U(s^t)|} ) ,1-\delta)$
\For{every K steps}
\State Update policy parameters using data in buffer: $\boldsymbol{\theta} \gets \boldsymbol{\theta} -\alpha_{\boldsymbol{\theta}} \nabla_{\boldsymbol{\theta}} \sum^K_{i=1} [\frac{1}{K} log \pi_{\boldsymbol{\theta}}(a^i|s^{i}) A(s^i,a^i) ] $
\State Update critic $\boldsymbol{c} \gets \boldsymbol{c} - \alpha_{\boldsymbol{c}} \nabla_{\boldsymbol{c}} \sum^K_{i=1} \frac{1}{K} [A(s^i,a^i)]^2$
\State Empty buffer.
\EndFor
\EndFor
\end{algorithmic}
\end{algorithm}}

\textcolor{violet}{\textbf{Rare Event Adaptive Actor Critic Algorithm (FIRE-ImACRE, Algorithm \ref{algo:AC})}:
Neural networks are used to approximate the ``actor'' and ``critic'' functions \cite{mnih2016asynchronous}. The critic function estimates the value (network \textbf{$c$}), based on which the actor function updates the policy (network $\mathbf{\theta}$). 
The advantage function $A(s^t,a^t)$ (line 8) measures how good taking action $a^t$ is, compared to taking the average action, at state $s^t$.
It is a function of the reward $r^{t+1}$, the values $V(s^t)$ and $V(s^{t+1})$ (obtained from the critic network $\boldsymbol{c}$ , line 7). It is corrected by the importance sampling correction weight $w^t(s^t,a^t,s^{t+1})$ (line 8), because we perform importance sampling of rare events. 
Each iteration, $A(s^t,a^t)$ will be stored in the buffer (line 9).
The variables $T, U, \hat{\epsilon}$ and $w_t$ are updated in the same manner as the tabular algorithm (Algorithm \ref{algo:combined}).
At the end of an epoch ($K$ timeslots), both the actor and critic networks are optimized, via gradient descent.
The gradients for the actor and critic networks, respectively, are: 
\begin{equation}
\nabla_{\boldsymbol{\theta}} \sum^K_{i=1} [\frac{1}{K} log \pi_{\boldsymbol{\theta}}(a^i|s^{i}) A(s^i,a^i) ].
\end{equation}
\begin{equation}
    \nabla_{\boldsymbol{c}} \sum^K_{i=1} \frac{1}{K} [A(s^i,a^i)]^2.
\end{equation}
Both gradients are functions of the advantage value, stored in the buffer every iteration (line 9).}

\textcolor{violet}{Both algorithms differ from traditional Deep Q learning and Actor Critic algorithms, as importance sampling of server failures is integrated.}

\textcolor{magenta}{\textbf{Application to other Settings with Rare Events:} Our algorithm can also be applied for other networked and wireless settings, in which there are rare events. An example application is UAV-assisted wireless networks with potential node failures. The failures are: UAV nodes failing (e.g. due to physical damage). The state space could include UAV positions and channel conditions, and the number of users served by each UAV. Possible actions include the user-UAV allocation, the bandwidth allocation, and the decision whether or not to deploying backup UAVs. 
The environment, state space, action space, and rare events would be specific to the new setting, while our importance-sampling algorithm itself would remain the same. 
}

\section{Heterogeneous Risk Tolerances}
\label{section:riskLevel}

\begin{figure*}[t]
\centering
\includegraphics[
angle=0,scale=0.37]{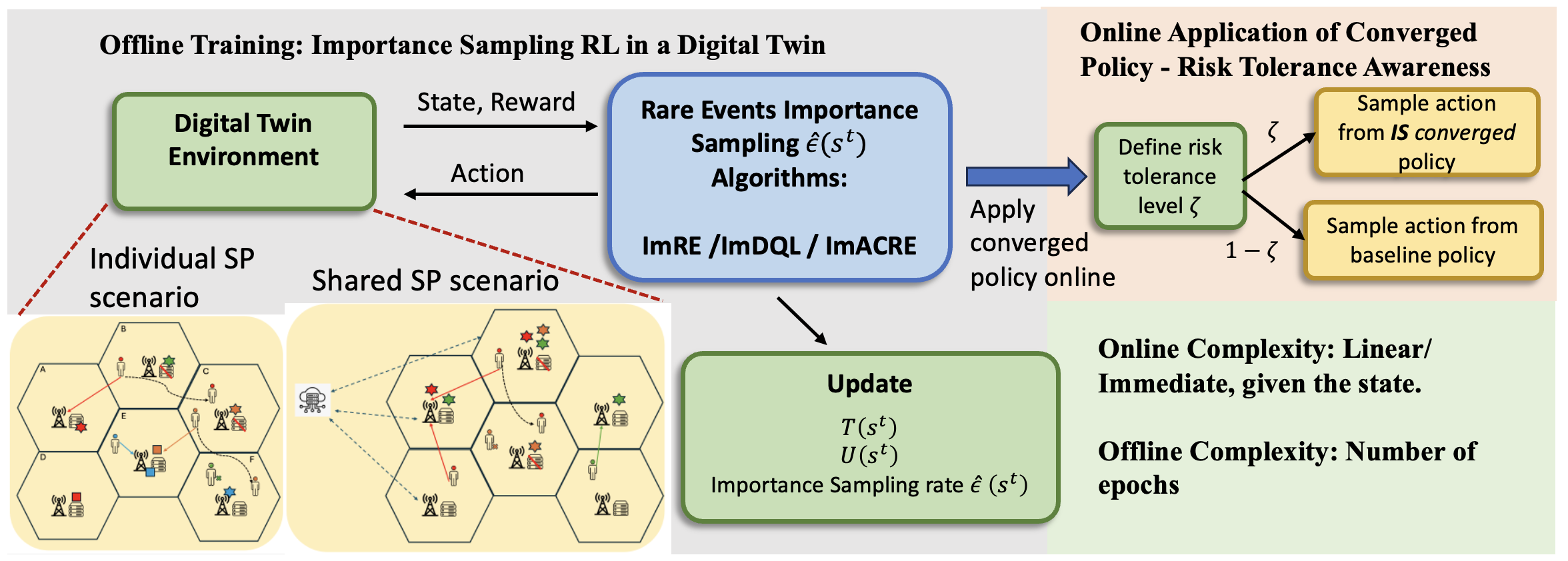}
\caption{\textcolor{magenta}{Algorithm Framework: Our importance sampling based reinforcement learning algorithms learns offline in a digital twin setup, to avoid experiencing the real cost of rare events. The converged policy is applied to online scenarios where rare events occur at their true rate.}}
\label{fig:algoFramework}
\end{figure*}


\textcolor{magenta}{We present our overall algorithm framework in Fig \ref{fig:algoFramework}. Our importance sampling based reinforcement learning algorithms learns offline in a digital twin setup, to avoid experiencing the real cost of rare events. The converged policy is applied to online scenarios where rare events occur at their true rate. Running the policy once trained is identical to deploying any RL-trained policy. If we use,
e.g., tabular RL, this is a table lookup; deep RL may require some inference time through the
neural network.}

Edge computing users, application vendors, and network providers have different tolerances towards server failures and the resulting higher latency which occur.
Some applications are highly latency sensitive, and any server failure has a significant impact on safety and smooth functioning of the application, e.g., obstacle detection in autonomous vehicles, or immersive outdoor sport applications.
Other users with less urgent jobs may not priortize a failure-aware backup placement solution, due to the extra costs of storing and migrating backups. 
\textcolor{brown}{To deal with heterogeneous risk tolerances, one potential solution is to alter the failure cost when training the algorithm. Nevertheless, the same rare event with the same system cost 
may lead to varying
risk tolerances and willingness to incur a preparation cost amongst different users.} 
Therefore, we propose a mechanism that can cater to edge computing users of heterogeneous risk tolerances. It takes the users' risk tolerance towards server failures as an input, and derives a joint service placement and backup placement policy, based on the users' risk tolerance. 
With RiTA \textcolor{orange}{(Risk Tolerance-Adaptive Algorithm)}, 
we also do not need to re-train a new policy $Q_{riskAware}$ for every possible level of risk\textcolor{cyan}{, which scales better to a wider variety of users}.

\begin{algorithm}[t]
\label{algo:riskAdaptive}
\caption{RiTA: Risk Tolerance-Adaptive Algorithm}
\begin{algorithmic}[1]
\State \textbf{Input:} User risk tolerance level $\zeta \in [0,1]$
\State $p_{riskAware} \gets \text{softmax}(Q_{riskAware}(s,a)) $
\State $p_{riskTaking} \gets \text{softmax}(Q_{riskTaking}(s,a))$
\For{all timeslots $t$}
\State State transition occurs according to $p(s'|s,a)$ and $\epsilon(s)$. 
\If{$\text{rand()} < \zeta$}
\State Sample an action according to $p_{riskTaking}$.
\Else 
\State Sample an action according to $p_{riskAware}$ 
\EndIf
\EndFor
\end{algorithmic}
\end{algorithm}

We present this algorithm in Algorithm \ref{algo:riskAdaptive} (RiTA). 
Firstly, RiTA takes the user's risk tolerance level, $\zeta$ as input. 
The lower a user's risk tolerance, the more the user wants to avoid the cost arising from server failure. 
The higher a user's risk tolerance, the less the user is willing to prepare for failure. 
Next, this algorithm converts $Q_{riskAware}$ to a vector of probabilities $p_{riskAware}$, where $Q_{riskAware}$ is the converged policy trained from one of our importance-sampling failure-adaptive Q-learning simulator algorithms (Algorithms \ref{algo:combined}, \ref{algo:DQN}, or \ref{algo:AC}).
The algorithm also converts $Q_{riskTaking}$ to a vector of probabilities $p_{riskTaking}$, where $Q_{riskTaking}$ is the converged policy of a Q-learning algorithm that does not take into account rare events in modelling of their environment, and does not consider the possibility of backups. This represents the risk taking component. 
Algorithm \ref{algo:riskAdaptive} is an online algorithm for real-time scenarios. For each timeslot, the state transition occurs according to $p(s'|s,a)$, and rare events according to the true rare events probability $\epsilon(s)$.
At each timeslot, the policy which is used will be in accordance with the risk tolerance level $\zeta \in [0,1]$.
The lower the risk tolerance $\zeta$, the higher the probability that the action selected will be sampled according to $p_{riskAware}$ \textcolor{black}{(lines 6-10)}, obtained from our algorithm's converged policy.
When $\zeta=0$, the action selected will be according to our algorithm's converged policy. 
\textcolor{violet}{In the next theorem, we characterize the condition under which it would have been beneficial to prepare for rare events.}

\textcolor{violet}{
\begin{theorem}
    After $T$ time-slots, at risk level $\zeta$, 
    assuming our algorithm is able to prevent the cost of failures with probability $1-f$, it benefited the user to prepare for rare events, if 
\begin{align*}
    & F_c > \frac{N_c ((\zeta-1)f+\zeta)}{(\zeta -(1-\zeta)f)} + \\
    & \frac{T[(1-\zeta)(N_c+B_c)-\zeta  N_c]  \sum_{i=0}^T \binom{T}{i} \epsilon^i (1-\epsilon)^{T-i} 
    }{(\zeta -(1-\zeta)f)\sum_{i=0}^T \binom{T}{i} \epsilon^i (1-\epsilon)^{T-i} i}, 
\end{align*}
where $F_c, N_c, B_c$ respectively denote the average failure cost, normal state cost, and backup placement and migration cost.
\end{theorem}
\begin{proof}
    The number of failures follow a binomial distribution with probability $\epsilon$. We use this to model the expected cost of preparation, and cost of not preparing for rare events.
\end{proof}
}

\begin{figure*}[t]
  \centering
\subfigure[Convergence of tabular importance sampling Q-learning algorithm FIRE-ImRE.]{\includegraphics[scale=0.27]{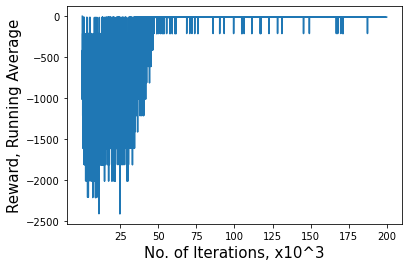}}
\hspace{0.01\textwidth}
  \subfigure[Convergence of actor critic importance sampling algorithm FIRE-ImACRE]{\includegraphics[scale=0.18]{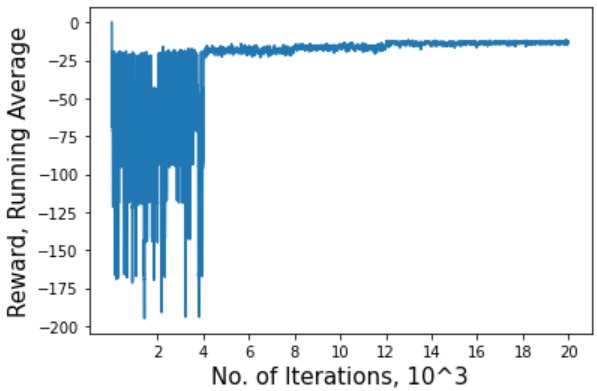}}
\hspace{0.01\textwidth}
   \subfigure[Convergence of deep Q-learning importance sampling algorithm FIRE-ImDQL, shared SP scenario.]
  {\includegraphics[scale=0.24]{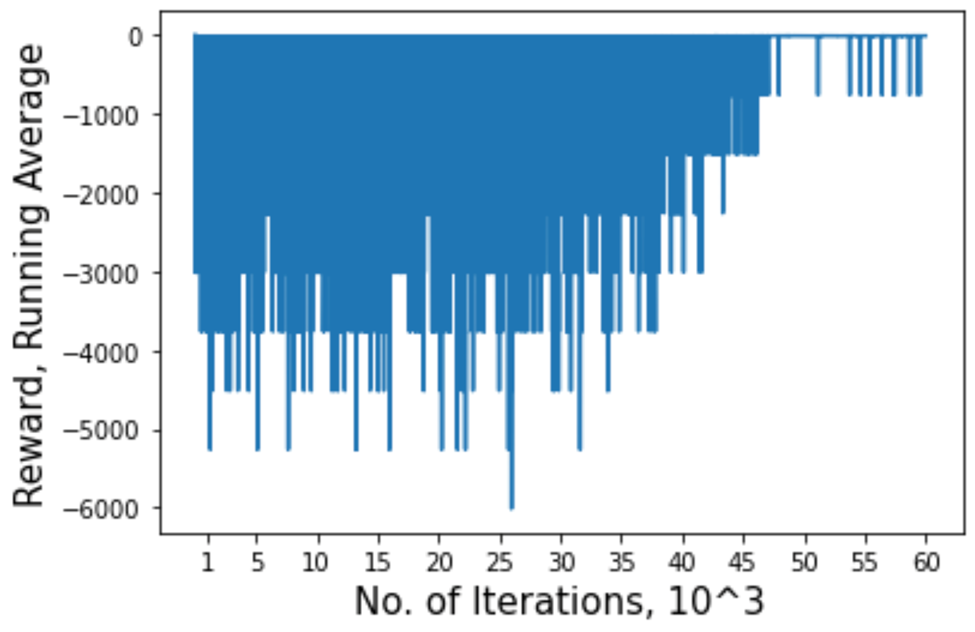}}
  \hspace{0.01\textwidth}
   \subfigure[\textcolor{magenta}{Convergence of deep Q-learning algorithm FIRE-ImDQL, individual SP scenario.}]
  {\includegraphics[scale=0.18]{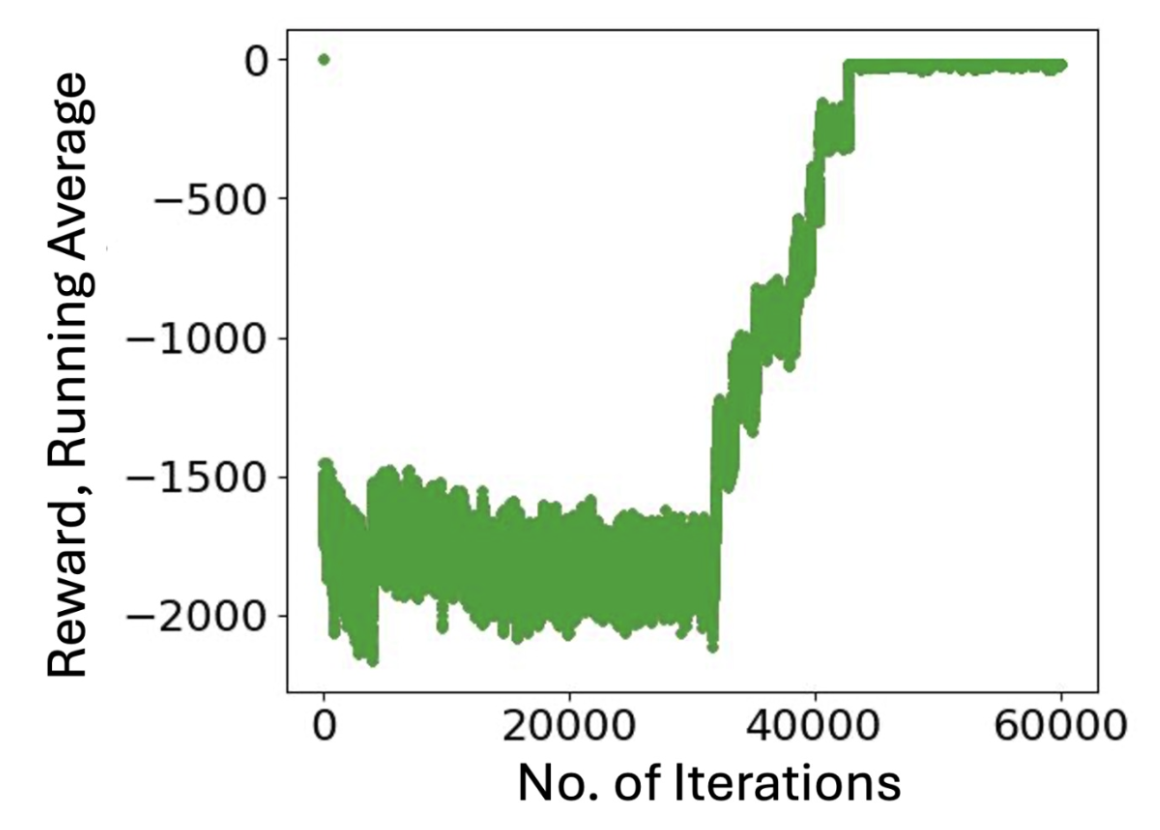}}
  \caption[Convergence.]
  {\textbf{Convergence graphs:} 
  \textcolor{orange}{Here, FIRE-ImRE and FIRE-ImACRE are applied to the a special case of the scenario (Section \ref{section:SysModel}) where every user has their own service profile (the single user case), and FIRE-ImDQL is applied to the scenario where users share service profiles (Section \ref{section:SysModel2}). \textcolor{cyan}{All three variations of our algorithm converge.}} 
  }
  \label{fig:convgGraphs}
\end{figure*}

\begin{figure*}[t]
  \centering
\subfigure[Comparison of average cost.]{\includegraphics[scale=0.2]{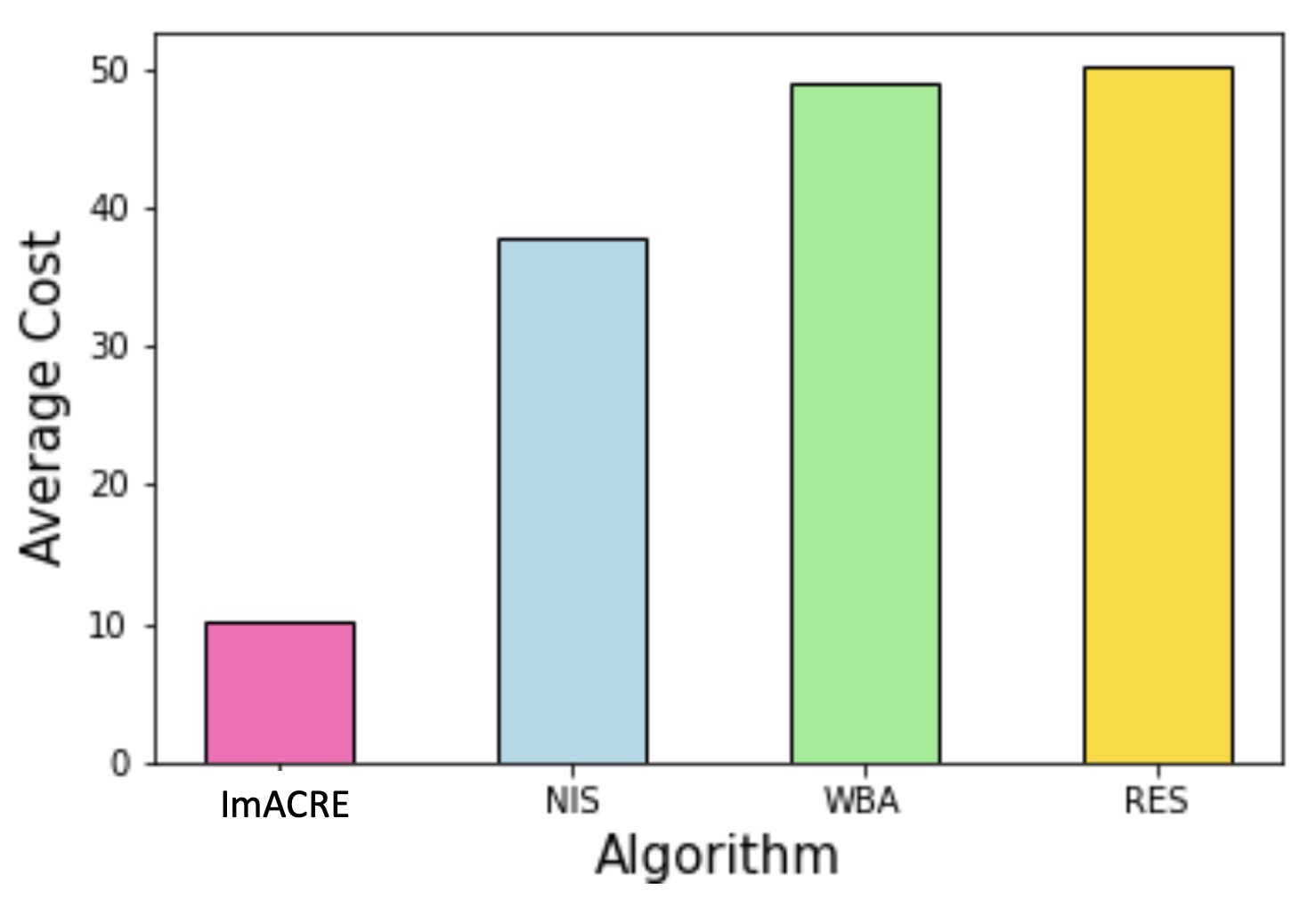}}
\hspace{0.01\textwidth}
  \subfigure[Comparison of average cost at rare state, when failure occurs.]
    {\includegraphics[scale=0.36]{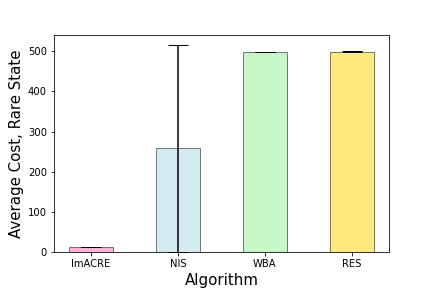}}
\hspace{0.01\textwidth}
  \subfigure[Comparison of the average cost breakdown, at normal states.]
  {\includegraphics[scale=0.36]{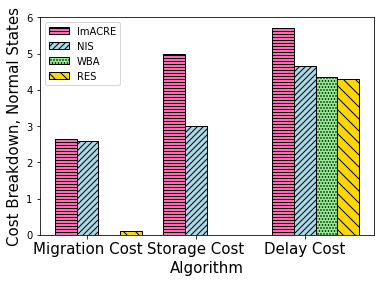}}
  \caption[Convergence.]
  {\textcolor{violet}{\textbf{Single user service migration scenario:} Comparison of our actor critic algorithm importance sampling FIRE-ImACRE, with actor critic versions of the baselines NIS, WBA and RES, in an online scenario.}\textcolor{cyan}{FIRE-ImACRE leads to lower costs on average and in rare failure states, but higher storage and delay costs in normal states.}}
  \label{fig:compAlgo_AC}
\end{figure*}

\section{Simulation \textcolor{cyan}{Experiments}}
\label{section:Simul}
In this section, we provide numerical evaluation for our algorithms FIRE-ImRE, FIRE-ImDQL, and FIRE-ImACRE, validating the results in Section \ref{section:AlgoSolution}.
\textcolor{brown}{In particular, we show that we have solved the main research challenges identified in Section \ref{section:Introduction}.
We show that 
\emph{FIRE-ImRE, FIRE-ImDQL and FIRE-ImACRE converged to optimality}.
We also show that they were able to \emph{learn a policy which mitigates the high cost of server failures}, unlike the baselines in which importance sampling or backups were not used, given a tradeoff of incurring a higher cost during normal states due to backup storage and migration.} \textcolor{cyan}{Moreover, \emph{we can adapt to varied risk tolerances without re-training.}}
We perform our experiments with the help of real world traces \cite{nsnam}, \textcolor{teal}{to obtain latency (delay) costs across different user-server location pairs in the network}.

We compare our algorithms with the following \textbf{baselines}, over 10 runs.
\textcolor{violet}{Each of these baselines, besides the greedy baseline, may be trained with deep Q-learning or actor critic (to correspond to our algorithm).} We will compare by applying the converged policy of each algorithm in an online scenario where rare events occur at their true rate.

\textit{RL with No Importance-Sampling (NIS)}: This simulator is a traditional Q-learning algorithm, without importance sampling of rare events. Rare events are simulated at their true rate, and backups are a possible action. \textcolor{cyan}{Comparison to this baseline indicates the role of importance sampling in helping the learned policy adapt to rare events.}

\textit{RL without Backups as an Action (WBA)}: This simulator is a Q-learning algorithm in which there is no importance sampling of rare events and backups are not used as possible actions. Rare events are simulated at their true rate. \textcolor{cyan}{Comparison to this baseline indicates the importance of backups in mitigating failure costs.}

\textit{RL without Rare Events Sampled (RES)}: This simulator is a Q-learning algorithm under the ``normal scenario", where rare events and backups are not considered during training. 
\textcolor{cyan}{Comparison to this baseline indicates the importance of including failures in the digital twin training environment.}


\textcolor{violet}{\textit{Greedy Placement (GPM):} Services and backups are placed at the top most likely locations of users, based on the current location of users,}
\textcolor{orange}{similar to reactive migration methods in the existing literature \cite{rejiba2019survey}.}
\textcolor{cyan}{Comparison to this baseline indicates the importance of using learned policies instead of heuristics to optimize migration policies.} 


\begin{figure*}[t]
  \centering
\subfigure[Comparison of average cost.]{\includegraphics[scale=0.34]{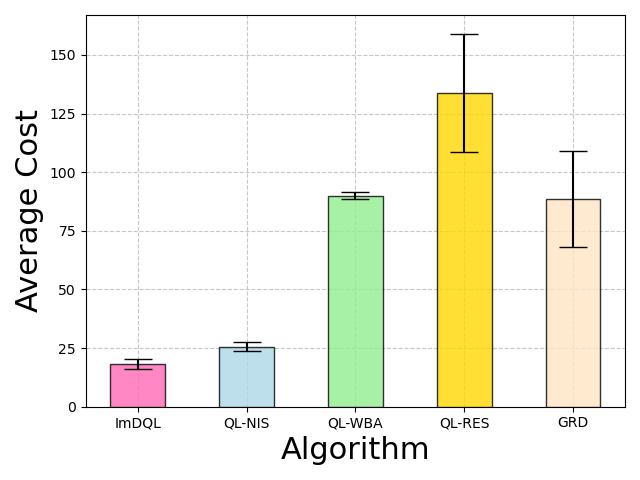}}
\hspace{0.01\textwidth}
  \subfigure[Comparison of average cost at rare state, when failure occurs.]
    {\includegraphics[scale=0.34]{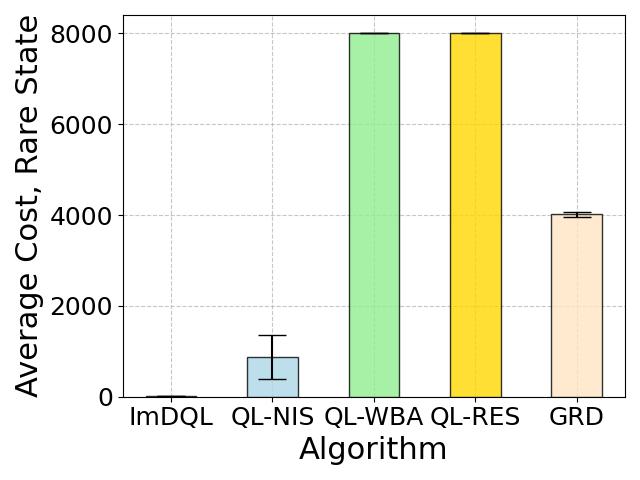}}
\hspace{0.01\textwidth}
  \subfigure[Comparison of the average cost, at normal states.]
  {\includegraphics[scale=0.34]{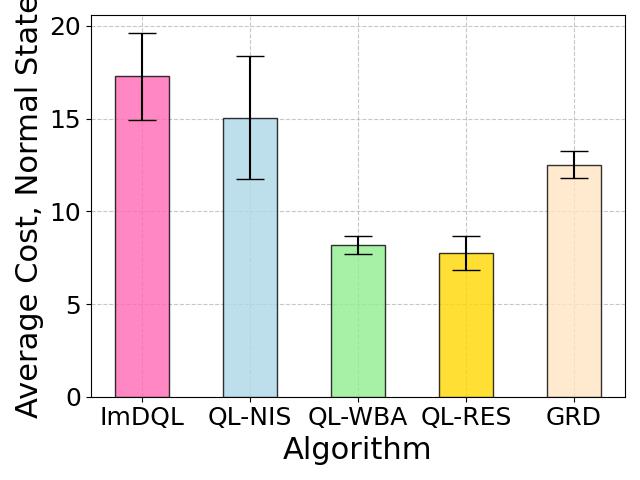}}
  \caption[Convergence.]
  {\textbf{Multiple user service profile migrations:} Comparison of our deep Q-learning algorithm FIRE-ImDQL, with baselines QL-NIS, QL-WBA and QL-RES, and greedy algorithm, in an online scenario. \textcolor{cyan}{FIRE-ImDQL yields much lower costs in rare states, yielding lower costs on average despite a slight increase in cost for normal states.}
  }
  \label{fig:MU_results}
\end{figure*}

\begin{figure*}[t]
  \centering
\subfigure[Comparison of average cost.]{\includegraphics[scale=0.39]{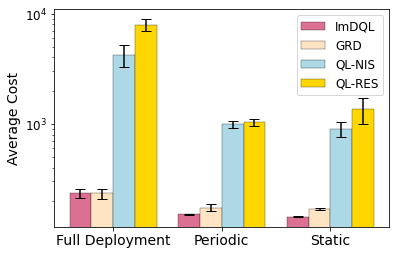}}
\hspace{0.01\textwidth}
  \subfigure[Comparison of average cost at rare state, when failure occurs.]
    {\includegraphics[scale=0.39]{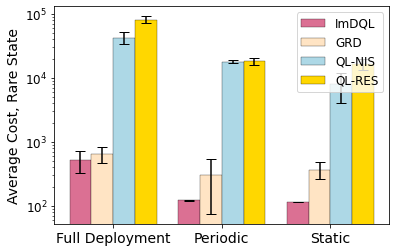}}
\hspace{0.01\textwidth}
   \subfigure[Comparison of the average cost, at normal states.]
  {\includegraphics[scale=0.39]{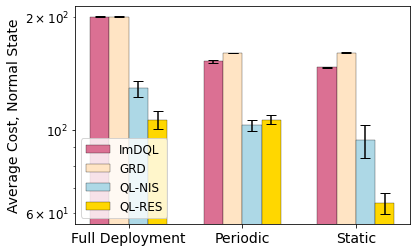}}
  \caption[Convergence.]
  {\textbf{Shared service profile migrations:} Comparison of our deep Q-learning algorithm FIRE-ImDQL, with baselines NIS, RES, and greedy algorithm, in three different settings. For all settings, our algorithm achieves a lower cost at rare states, and on average, than the baselines.
  }
  \label{fig:sharedSP_results}
\end{figure*}

\textcolor{violet}{\textbf{Single User Service Profile Migrations:} In our experiments, we have 9 base stations (edge \textcolor{magenta}{APs}). The user is highly mobile across these coverage areas, and its mobility pattern follows a transition matrix. 
\textcolor{magenta}{We use the same network settings as in \cite{kim2022modems} to simulate dynamic latencies between user locations and different base stations as users move around the service area: In particular, the ns-3 network simulator is used \cite{nsnam} to simulate dynamic channel network conditions, to obtain these latencies.} 
These latencies are in the range (2,10). As both the latency costs and migration costs are functions of the distance between user and \textcolor{magenta}{APs}, we set the migration cost $m_{ij} = d^{comm}_{i,j} + \varepsilon$, where $\varepsilon \in (-0.5,0.5)$.
We set the cost of failures to be -500, representing the high cost in terms of user experience or safety, \textcolor{orange}{and the true failure rate as} 
0.01 \textcolor{green}{(1\%)} \cite{liu2021reliability}. We choose this failure rate, because 
%
\textcolor{magenta}{if the failure rate is of a higher order (e.g. 15 percent), then we may not need a special failure-aware importance sampling strategy as the `failures' are no longer rare events - they appear often enough for normal RL methods to learn a policy which deals with them. Conversely, if the failure rate is lower (e.g. 0.001=0.1 percent), the failures would give a smaller expected cost on the system (Definition 1 in Section 3.1), so are less important to handle.}
The learning rate used is $0.05$ and both the actor and critic networks, for both our algorithms and the baselines, have a structure of $[24, 48, 24]$ \textcolor{cyan}{nodes per layer}. }

\textcolor{violet}{We apply our importance sampling based actor critic algorithm FIRE-ImACRE and the corresponding actor critic baselines NIS, WBA, RES, for this scenario.
In Fig. \ref{fig:convgGraphs}a, we show the convergence of FIRE-ImACRE.}
\textcolor{violet}{In Fig. \ref{fig:compAlgo_AC}, we plot the overall average cost, average cost at rare states, and at normal states, for our importance sampling based actor critic algorithm FIRE-ImACRE and the corresponding actor critic baselines NIS, WBA, RES. All algorithms are run 10 times, and the average is plot. Our algorithm achieves a lower average cost at rare states, because FIRE-ImACRE sufficiently samples rare events to learn a policy which prepares for them. Our algorithm incurs a trade-off with higher normal state costs, incurred through the migration and storage of backups. It can be seen in Fig. \ref{fig:compAlgo_AC}b that Algorithm NIS avoids the elevated costs associated with rare events in certain runs. However, its performance exhibits a greater variability.
WBA and RES incur the highest costs in rare events due to the lack of backup usage and the exclusion of rare events from sampling during RES simulator training.
}

\textcolor{purple}{\textbf{Multiple User Service Profile Migrations:} Here, we consider service profile migration for \textbf{multiple users}, each with their own service profile. We model 13 users mobile across 9 \textcolor{magenta}{APs}. Parameter settings are similar to the setup above. The natural failure rate is set to be 0.01. User mobility patterns are sampled according to a transition probability matrix. 
We model \textit{different server failure types}, corresponding to the differing amounts of maintenance resources at each AP, 
which affect the time taken for the servers to come back online. For APs 1,6, and 8, we let their service downtime be 2 time-slots \textcolor{cyan}{per failure}. For all other APs, we let their service downtime be 1 time-slot \textcolor{cyan}{per failure}.
We apply algorithm FIRE-ImDQL, our importance sampling based deep Q learning algorithm in this scenario. 
}
\textcolor{purple}{In Fig. \ref{fig:MU_results}, we see that our \textcolor{cyan}{FIRE-ImDQL policy} outperforms the baselines by achieving a lower average cost during rare states, with a trade-off of a higher average cost in normal (non-rare) states. Overall, our algorithm still achieves a lower cost, when averaged over all timeslots. QL-NIS and the greedy algorithm perform better than the other baselines, as backups are still a potential action for the algorithm in QL-NIS, and the greedy algorithm places one backup at the second most likely location every time-slot. Nevertheless these algorithms do not avoid the rare event cost as much as our algorithm does. }

\begin{figure*}[t]
  \centering
  \subfigure[Individual SP scenario: Average reward across rare states.]
{\includegraphics[scale=0.3]{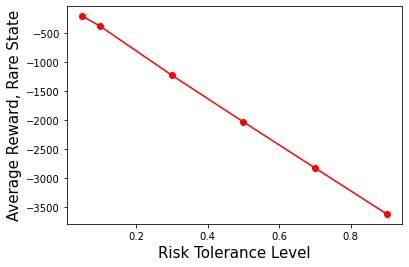}}
\hspace{0.01\textwidth}
   \subfigure[Individual SP scenario: Average reward across normal states.]
  {\includegraphics[scale=0.3]{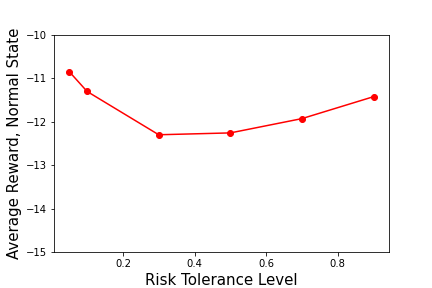}}
\hspace{0.01\textwidth}
  \subfigure[Shared SP scenario: Average Cost across all states.]{\includegraphics[scale=0.18]{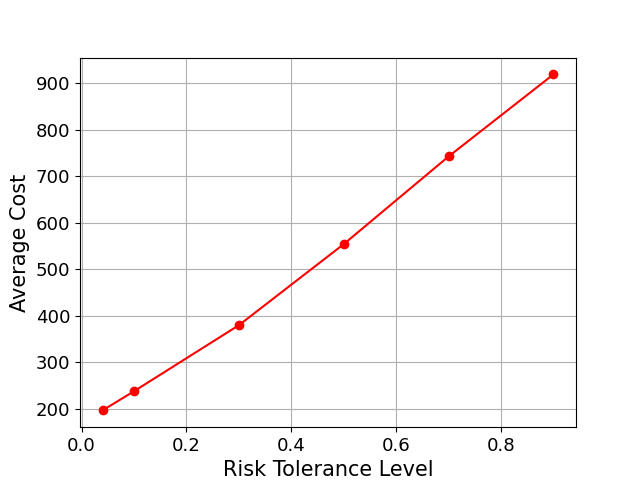}}
\hspace{0.01\textwidth}
  \subfigure[Shared SP scenario: Average Cost, normal states.]{\includegraphics[scale=0.18]{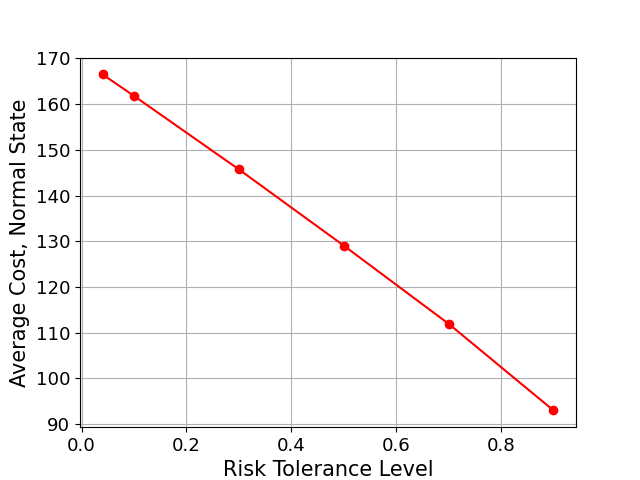}}
  \caption[Risk Level.]
  {\textbf{Heterogeneous Risk Tolerances:} Evaluating how the risk tolerance levels impact the cost, when Algorithm RiTA is applied. \textcolor{cyan}{A higher risk tolerance leads to higher cost and lower reward in rare states, but may decrease the cost in normal states.}
  }
  \label{fig:riskLevelEffect}
\end{figure*}

\textcolor{magenta}{\textbf{Comparison with PPO:} In Table \ref{tab:PPO_comp}, we present results in which our algorithm is compared with PPO (proximal policy optimization), for the multiuser scenario, where users have individual service profiles. The results are averaged over 6 runs.
It outperforms PPO when the converged policies of both algorithms are compared in an online setting where rare events occur at their true rate: PPO has a higher average cost over time, and higher rare event cost, as there is no importance sampling and hence lack of rare event data to train a policy which is adaptive to rare events.}

\begin{table}[h]
\centering
\caption{\textcolor{magenta}{Comparison of ImDQL with PPO, averaged over 6 runs. For both algorithms, the converged policy is applied in an online scenario where rare events occur at their true rate.}}
\label{tab:PPO_comp}
\begin{tabular}{|c|c|c|}
\hline
\textcolor{magenta}{\textbf{Reward across time, online setting}} & \textcolor{magenta}{\textbf{ImDQL}} & \textcolor{magenta}{\textbf{PPO}} \\
\hline
Avg reward across time & -18.210 & -22.031 \\
\hline
Mean RE reward & -18.689 & -551.79 \\
\hline
Mean RE migration cost & -6.688 & -4.628 \\
\hline
Mean RE delay & -5.176 & -5.342 \\
\hline
Mean RE storage cost & -5.000 & -4.798 \\
\hline
Mean RE computing cost & -0.9289 & -0.889 \\
\hline
Mean NS reward & -17.279 & -14.584 \\
\hline
Mean NS migration cost & -6.968 & -4.656 \\
\hline
Mean NS delay & -5.311 & -5.171 \\
\hline
Mean NS storage cost & -5.000 & -4.757 \\
\hline
Mean NS comp cost & -0.908 & -0.895 \\
\hline
\end{tabular}
\end{table}

\textcolor{magenta}{\textbf{Higher failure rates:} We have further performed new experiments with higher failure rates, namely 5 and 10\% (Tables \ref{tab:5percentComp} and \ref{tab:10percentComp}). Our algorithm is able to handle higher failure rates. Note that if failures are fairly common, then we may not need to use importance sampling to amplify rare events in the training dataset, as the rare failure events would naturally occur at a fairly high rate. Thus, reinforcement learning algorithms without importance sampling should still be able to learn good policies that account for failure costs, as these costs would be present in the training dataset. Indeed, we observe this effect in our results. At a failure rate of 5 percent (Table \ref{tab:5percentComp}), our algorithm still outperforms the baselines, achieving a much lower mean rare events reward. QL-NIS outperforms QL-WBA and QL-RES in the online scenario because backups and rare events are considered in training. At a failure rate of 10 percent (Table \ref{tab:10percentComp}), the performance of QL-NIS improves, as it is easier learn a policy in light of not-as-rare server failures. While they are of the same order, our algorithm still outperforms it.}


\begin{table}[ht]
\centering
\caption{\textcolor{magenta}{Comparison of Rewards (Costs) in Online Setting. Failure rate of 5$\%$.}}
\label{tab:5percentComp}
\begin{tabular}{|c|c|c|c|c|}
\hline
\textbf{\shortstack{Reward\\ across time}} & \textcolor{magenta}{\textbf{ImDQL}} & \textcolor{magenta}{\textbf{QL-NIS}} & \textcolor{magenta}{\textbf{QL-WBA}} & \textcolor{magenta}{\textbf{QL-RES}} \\
\hline
\shortstack{Avg reward \\ across time} & -19.586 & -26.062 & -416.322 & -578.736 \\
\hline
Mean RE reward & -20.679 & -186.516 & -8002.600 & -8003.399 \\
\hline
\shortstack{Mean RE \\ migration cost} & -7.663 & -6.284 & -2.600 & -3.399 \\
\hline
Mean RE delay & -5.189 & -4.785 & 0.000 & 0.000 \\
\hline
\shortstack{Mean RE storage \\cost} & -4.999 & -4.914 & 0.000 & 0.000 \\
\hline
\shortstack{Mean RE \\computing cost} & -0.957 & -1.029 & -0.929 & -0.993 \\
\hline
Mean NS reward & -18.394 & -15.483 & -8.061 & -7.804 \\
\hline
\shortstack{Mean NS \\ migration cost} & -8.053 & -6.332 & -2.609 & -3.456 \\
\hline
Mean NS delay & -5.344 & -4.244 & -5.452 & -4.349 \\
\hline
\shortstack{Mean NS storage \\ cost} & -4.997 & -4.907 & 0.000 & 0.000 \\
\hline
\shortstack{Mean NS\\computing cost} & -0.924 & -1.038 & -0.932 & -1.003 \\
\hline
\end{tabular}
\end{table}

\begin{table}[ht]
\centering
\caption{\textcolor{magenta}{Comparison of Rewards (Costs) in Online Setting. Failure rate of 10$\%$.}}
\label{tab:10percentComp}
\begin{tabular}{|c|c|c|c|c|}
\hline
\textbf{\shortstack{Reward \\across time}} & \textcolor{magenta}{\textbf{ImDQL}} & \textcolor{magenta}{\textbf{QL-NIS}} & \textcolor{magenta}{\textbf{QL-WBA}} & \textcolor{magenta}{\textbf{QL-RES}} \\
\hline
\shortstack{Avg reward \\across time} & -20.218 & -18.665 & -416.322 & -578.736 \\
\hline
Mean RE reward & -23.414 & -42.013 & -8002.600 & -8003.399 \\
\hline
\shortstack{Mean RE \\migration cost} & -7.643 & -5.568 & -2.600 & -3.399 \\
\hline
Mean RE delay & -5.124 & -4.652 & 0.000 & 0.000 \\
\hline
\shortstack{Mean RE \\ storage cost} & -4.997 & -4.984 & 0.000 & 0.000 \\
\hline
\shortstack{Mean RE\\ computing cost} & -0.943 & -1.027 & -0.929 & -0.993 \\
\hline
Mean NS reward & -18.265 & -14.698 & -8.061 & -7.804 \\
\hline
\shortstack{Mean NS\\ migration cost} & -7.943 & -5.633 & -2.609 & -3.456 \\
\hline
Mean NS delay & -5.328 & -4.082 & -5.452 & -4.349 \\
\hline
\shortstack{Mean NS \\storage cost} & -4.994 & -4.983 & 0.000 & 0.000 \\
\hline
\shortstack{Mean NS\\ computing cost} & -0.910 & -1.043 & -0.932 & -1.003 \\
\hline
\end{tabular}
\end{table}

\textcolor{purple}{\textbf{Shared service profiles:} In this scenario (Fig. \ref{fig:sharedSP_results}), we consider multiple users sharing the same service profile \textcolor{cyan}{(SP)}. We model 100 users, with 70 users using SP Type 1, and 30 using SP Type 2. These users are spread over a 2-city map, with 2APs in one city and 3 in another. 
It has a failure cost of 1000 and failure rate of 1e-2 \cite{liu2021reliability}. If a user's required \textcolor{magenta}{AP} is in another city, the delay cost between cities will be nearly as large as the failure cost, which is larger than the delay cost when accessing an SP within the same city. We apply FIRE-ImDQL to this scenario and compare its performance with NIS, RES and the greedy baseline. 
The greedy baseline deploys SPs to the $N$ most probable places w.r.t distribution of user transitions, where we search through all $N$ and plot the results with the lowest costs. 
}

\textcolor{purple}{We look at 3 settings: a) \textit{Full Deployment}, where users are spread across zones evenly. The greedy baseline's optimal \textcolor{cyan}{solution} is to fully deploy all SPs.
b) \textit{Periodic}, where users move across areas periodically, and c) \textit{Static}, a corner case where users stay in their current zone.
As seen in Fig. \ref{fig:sharedSP_results}, in all settings, FIRE-ImDQL achieves a lower cost in rare event states, followed by GRD \textcolor{cyan}{(greedy)}, as compared to QL-NIS and QL-RES, which place little or no backups. While there is a trade-off with a higher cost in normal states, on average over all states, FIRE-ImDQL and GRD still achieve a lower cost (Fig. \ref{fig:sharedSP_results} a) \textcolor{cyan}{than the other baselines}. \textcolor{cyan}{FIRE-ImDQL outperforms GRD in the periodic and static settings, as the greedy baseline cannot adapt to the presence of rare failures that may occur in each city.}
}

\textcolor{purple}{\textbf{Catering to varying risk tolerances:} We apply our algorithm RiTA (Algorithm \ref{algo:riskAdaptive}), which caters to users having different risk tolerances. 
Given the risk tolerance level of a user $\zeta$, with probability $1-\zeta$, RiTA will select an action according to the converged policy of our algorithm FIRE-ImRE (risk adverse component), and with probability $\zeta$ RiTA will select an action according to the converged policy of the baseline QL-RES (risk taking component).
We apply RiTA to the single user (3 APs) service migration setting, which is a special case of the individual SP scenario, and to the multiple user shared SP scenario (5 APs across 2 cities).}

\textcolor{purple}{In Fig. \ref{fig:riskLevelEffect} we present how the risk tolerance level $\zeta$ impacts the reward levels. 
As seen in Figs. \ref{fig:riskLevelEffect} a) and b), for the single user individual SP scenario, the higher the risk tolerance, the higher the cost experienced at rare states (as it is less likely backups were placed). 
For normal states, as the risk tolerance level $\zeta$ decreases from $\zeta=1$, the cost first increases due to the increase in storage of backups (resulting from being increasingly risk adverse). 
After $\zeta=0.3$, the cost decreases. 
This is because as $\zeta$ decreases, the weightage given to the risk adverse component (our algorithm FIRE-ImRE) increases. While FIRE-ImRE incurs a higher storage cost than RES, it manages to learn a policy in which frequent migrations do not occur, hence decreasing the normal state cost.
For the shared service profile scenario, as the risk tolerance $\zeta$ increases, the cost at normal state decreases due to the placement of less services, while the average cost (over all states) increases. This is because multiple users share service profiles like game environments, multiplying the cost when failures occur.}

\textcolor{magenta}{\textbf{Future work:} Scalability to a large number of users: 
If users do not share service profiles, resource allocation is mostly de-coupled, except for the `queuing delay at server' cost.
Under this setting, the algorithm may not be as scalable for larger state spaces, because in our importance sampling algorithm, we update the functions $T(s)$ and $U(s)$ at state level, necessitating us to sample each state multiple times. 
Future areas of investigation can include modifying the importance sampling-RL algorithm (which is trained offline), to sample multiple trajectories in parallel while updating $T$ and $U$, for faster convergence.
Another future area of investigation is to do “intelligent” or dynamic decoupling of the state space over base stations, via splitting the network into smaller subsets of groups of APs based on user mobility patterns. This makes the state space smaller. The algorithm can then be run on each `smaller state space'. }

\section{Conclusion}
\label{section:conclusion}
In edge computing, server failures may spontaneously occur, which complicates resource allocation and service migration. While failures are considered as rare events, they have an impact on the smooth and safe functioning of edge computing's latency sensitive applications.
As these failures occur at a low probability, it is difficult to jointly plan or learn an optimal service migration solution for both the typical and rare event scenarios. 
Therefore, we introduce a rare events adaptive resilience framework named FIRE, using the placement of backups and importance sampling based RL, which alters the sampling rate of rare events in order to learn an optimal \textcolor{cyan}{risk-averse} policy. 
We prove the boundedness and convergence to optimality of our proposed tabular Q-learning algorithm FIRE-ImRE.
To handle large and combinatorial state and action spaces common in real-world networks, we propose 
deep Q-learning (FIRE-ImDQL) and actor critic (FIRE-ImACRE) versions of our algorithm. Furthermore, we propose a decision making algorithm (RiTA) that caters to heterogeneous risk tolerances across users. 
Finally, we use trace driven experiments to show that our algorithms converge to optimality, and are resilient towards server failures, unlike several baselines, 
subject to a trade-off in terms of a potential higher normal state cost.
Our framework can also be modified for other resource allocation problems pertaining rare events with large consequences in communication and networking.








%

\bibliographystyle{IEEEtran}

\bibliography{references} 

\begin{IEEEbiographynophoto}
{Marie Siew} is a Faculty Early Career Award (FECA) Fellow at the Singapore University of Technology and Design.
\end{IEEEbiographynophoto}

\vskip -2.7\baselineskip plus -1fil
\begin{IEEEbiographynophoto}
{Shikhar Sharma} was a Masters Student in Carnegie Mellon University.
\end{IEEEbiographynophoto}
\vskip -2.7\baselineskip plus -1fil
\begin{IEEEbiographynophoto}
{Zekai Li} was a Masters Student in ECE, Carnegie Mellon University.
\end{IEEEbiographynophoto}
\vskip -2.7\baselineskip plus -1fil
\begin{IEEEbiographynophoto}
{Kun Guo} is a Professor at East China Normal University, Shanghai.
\end{IEEEbiographynophoto}
\vskip -2.7\baselineskip plus -1fil
\begin{IEEEbiographynophoto}
{Chao Xu} is a Professor in Northwest A$\&$F University, China.
\end{IEEEbiographynophoto}
\vskip -2.7\baselineskip plus -1fil
\begin{IEEEbiographynophoto}
{Tania Lorido Botran} is a Research Scientist at Roblox.
\end{IEEEbiographynophoto}
\vskip -2.7\baselineskip plus -1fil
\begin{IEEEbiographynophoto}
{Tony Q.S. Quek} is the Cheng Tsang Man Chair Professor with Singapore University of Technology and Design (SUTD) and ST Engineering Distinguished Professor.
\end{IEEEbiographynophoto}
\vskip -2.7\baselineskip plus -1fil
\begin{IEEEbiographynophoto}
{Carlee Joe-Wong} is the Robert E. Doherty Associate Professor of Electrical and Computer Engineering at Carnegie Mellon University.
\end{IEEEbiographynophoto}



\end{document}